\newtheorem{theorem}{Result}
\newtheorem{lemma}{Lemma}
\newcommand{\CY}[1]{{\color{red}#1}}
\newcommand{\ketbra}[1]{\vert #1 \rangle \langle #1 \vert}
\newcommand{\ketbratwo}[2]{\vert #1 \rangle \langle #2 \vert}
\newcommand{\onenorm}[1]{ \| #1 \|_1}
\definecolor{teal}{RGB}{42, 157, 143}
\definecolor{yellow}{RGB}{233, 196, 106}
\definecolor{red}{RGB}{210, 66, 66}
\definecolor{lred}{RGB}{222,94,100}
\definecolor{lblue}{RGB}{179, 235, 242 }
\newcommand{\fref}[1]{\textcolor{blue}{\hyperref[#1]{Fig.$\,$\bfseries\ref{#1}}}}
\newcommand{\lemref}[1]{\textcolor{blue}{\hyperref[#1]{Lemma$\,$\bfseries\ref{#1}}}}
\newcommand{\thmref}[1]{\textcolor{blue}{\hyperref[#1]{Thm.$\,$\bfseries\ref{#1}}}}
\newcommand{\modesuper}[1]{\scriptscriptstyle(#1)}
\newcommand{\recsuper}[1]{\scriptscriptstyle #1}
\newcommand{\drawCustomCircle}[8]{%
    \begin{tikzpicture}
        \draw[draw=#4, fill=#5, fill opacity=#6, line width=#7]
            (#1,#2) circle (#3); 
            \node at (#1,#2) {#8};
    \end{tikzpicture}%
}
\newcommand{\drawCustomCircletwo}[8]{%
    \tikz[baseline=(circle_node.base)] {
        \draw[draw=#4, fill=#5, fill opacity=#6, line width=#7]
            (0,0) circle (#3);
        \node[circle, minimum size=2*#3, inner sep=0pt, outer sep=0pt] (circle_node) at (0,0) {#8};
    }%
}
\tikzset{
    every node/.style={font=\small},
    arrow/.style={-{Stealth}, thick},
    implies/.style={->, double equal sign distance, thick}
}
\newtcolorbox[auto counter]{mybox}[2][]{
    breakable = false,
    enhanced,
    sharp corners,
    colback=violet!3!white,
    colframe=violet!40!white,
    fonttitle=\bfseries,
    title={\centering \strut #2}, 
    enlarge bottom at break by=5mm,
    enlarge top at break by=5mm,
    overlay first={%
        \draw[black, line width=0.5mm](frame.south west)--(frame.south east);
        \node[anchor=north east] at (frame.south east) {continued on next page};
    },
    overlay middle={%
        \draw[black, line width=0.5mm](frame.south west)--(frame.south east);
        \draw[black, line width=0.5mm](frame.north west)--(frame.north east);
        \node[anchor=north east] at (frame.south east) {continued on next page};
        \node[anchor=south west] at (frame.north west) {continued from next page};
    },
    overlay last={%
        \draw[black, line width=0.5mm](frame.north west)--(frame.north east);
        \node[anchor=south west] at (frame.north west) {continued from last page};},
    #1
}
\let\oldaddcontentsline\addcontentsline
\renewcommand{\addcontentsline}[3]{}
\begin{document}

\preprint{APS/123-QED}

\title{Unspeakable Coherence Concentration}

\author{Benjamin Stratton}
\email[]{ben.stratton@bristol.ac.uk}
\affiliation{Quantum Engineering Centre for Doctoral Training, H. H. Wills Physics Laboratory and Department of Electrical \& Electronic Engineering, University of Bristol, BS8 1FD, UK}
\affiliation{H.H. Wills Physics Laboratory, University of Bristol, Tyndall Avenue, Bristol, BS8 1TL, UK}

\author{Chung-Yun Hsieh}
\affiliation{H.H. Wills Physics Laboratory, University of Bristol,
Tyndall Avenue, Bristol, BS8 1TL, UK}

\author{Paul Skrzypczyk}
\affiliation{H.H. Wills Physics Laboratory, University of Bristol,
Tyndall Avenue, Bristol, BS8 1TL, UK}

\date{\today}

\begin{abstract}
Unspeakable coherence is a key feature separating quantum and classical physics. Modelled as asymmetry with respect to a continuous transformation generated by a physically relevant observable, such as the Hamiltonian or angular moment, unspeakable coherence has been shown to be the relevant notion of coherence for achieving quantum advantage in the tasks of metrology, reference frame alignment and work extraction, among others. A question of both practical and foundational value is: Given some copies of a state with low coherence, can we prepare a more coherent state via coherence non-increasing operations? Here, we study this question in the minimal limiting case: Given two uncorrelated copies of a coherent state, can one, via globally coherence non-increasing unitaries, increase the coherence in a subsystem? We fully solve this problem for qubits, identifying the optimal unitaries and revealing the existence of bound coherence. This is then used to create a completely constructive multi-qubit coherence enhancement protocol, where only effective-qubit unitaries are used. Unexpectedly, in this protocol, we show that there exists states for which the ratio of the input-output coherence can be amplified unboundedly. Extending beyond qubits, we derive two fundamental upper bounds on the amount of local coherence that can be increased and prove a no-go theorem showing that certain global correlations cannot be converted to local coherence.

\end{abstract}

\maketitle
\section{Introduction}
Coherence is a key quantum feature.  
Whilst it is necessary for many quantum-enhanced tasks --- such as those in metrology~\cite{PhysRevLett.96.010401}, communication~\cite{Chen_2021}, computation~\cite{nielsen_chuang_2010}, and more --- it is highly non-trivial to generate, maintain, and manipulate. Consequently, as with many other quantum properties, it has proven fruitful to treat it as a resource \cite{Chitambar_2019, PhysRevA.94.052324, PhysRevLett.113.140401}.

Coherence comes in two forms: speakable and unspeakable~\cite{PhysRevA.94.052324, PhysRevLett.113.140401}. In both cases, coherence is only meaningful when considered with respect to a known decomposition of the Hilbert space into subspaces. Colloquially, for unspeakable coherence, this decomposition is performed by considering the eigenspaces of some physically relevant observable --- such as the Hamiltonian, angular momentum, or photon number. Whereas, for speakable coherence, an arbitrary basis is used. 

The notion of unspeakable coherence has attracted significant attention as it provides a natural framework for describing coherence manipulation under symmetry-constrained dynamics, such as those conserving energy or momentum \cite{Lostaglio2015, PhysRevLett.132.200201, PhysRevX.5.021001, PhysRevLett.132.180202, Marvian2020, PhysRevLett.123.020403, PhysRevLett.123.020404, PhysRevLett.113.150402, PhysRevA.103.022403, PhysRevLett.128.240501}. As a result, it has been instrumental in building our understanding of the role of coherence in thermodynamics \cite{Lostaglio2015, PhysRevX.5.021001, PhysRevLett.132.200201}. Moreover, unspeakable coherence has proven to be the relevant notion of coherence in the important tasks of quantum metrology~\cite{PhysRevLett.96.010401}, reference frame alignment~\cite{RevModPhys.79.555}, and work extraction~\cite{Lostaglio2015}, among others~\cite{PhysRevA.93.052331}. Therefore, we here focus on unspeakable coherence, henceforth referring to it as just coherence. 

Given some partially coherent state, the question of whether its coherence can be increased under coherence non-increasing operations is of both foundational and practical interest \cite{PhysRevLett.116.120404, PhysRevLett.132.180202}. Foundationally, it allows us to understand the interplay of coherence and some conserved quantity, such as how coherence can be manipulated by energy-conserving dynamics~\cite{Lostaglio2015, PhysRevX.5.021001, PhysRevLett.132.180202, PhysRevLett.132.200201, Gour_2008}. Practically, these insights can guide the development of methods for generating highly coherent states, which are an essential requirement for quantum technologies.

In this work, we look to answer the resource concentration problem~\cite{hsieh2024informationalnonequilibriumconcentration} for coherence: given two uncorrelated copies of a coherent state, can one, via globally coherence non-increasing unitaries i.e., closed system dynamics, increase the coherence in a subsystem? This is the minimally feasible, and therefore the most practically implementable, scenario where coherence enhancement can be considered. Whilst the manipulation and enhancement of unspeakable coherence has been well studied in the asymptotic and catalytic conversion scenarios~\cite{Marvian2020, PhysRevLett.132.180202, PhysRevLett.123.020403, PhysRevLett.123.020404, PhysRevLett.113.150402, PhysRevA.103.022403, PhysRevLett.128.240501, PhysRevLett.132.200201}, an understanding of this minimalist case --- where access to almost all available freedoms is limited --- is missing. Understanding this limiting case sets a lower bound on what coherence enhancement is possible, complementing the upper bounds set by considering asymptotic distillation protocols \cite{PhysRevLett.132.180202}. Moreover, it allows a bottom-up understanding of what is necessary for coherence enhancement. 

Here, we fully solve the problem in the qubit case and use this complete characterisation to build a fully constructive protocol for multi-qubit coherence amplification, where only effective qubit unitaries are used. Unexpectedly, in this protocol, {we find that the input-output ratio of the unspeakable coherence can be unboundedly increased for some input states}. We then go beyond qubits to provide multiple, incomparable, fundamental limitations as well as no-go results on coherence enhancement.

\section{Unspeakable Coherence}
Consider a finite-dimensional system and let $L$ be a Hermitian operator acting on it, such that $L$ describes a physical observable. A state $\rho$ is said to be coherent with respect to $L$ if and only if $[\rho, L] \neq 0$~\cite{PhysRevA.94.052324}, as this condition is equivalent to $\rho$ containing superpositions between different eigenspaces of $L$. If this is not the case, we say $\rho$ is incoherent. This is equivalent to \hbox{$e^{-iLx}\rho e^{iLx}=\rho$ $\forall\,x\in\mathbb{R}$}, meaning that $\rho$ is `translation-invariant' under the continuous translation generated by $L$~\cite{PhysRevA.94.052324}.

As we aim to study coherence enhancement from one system to another, it is essential to characterise operations that cannot generate coherence. This ensures that no additional coherence can be added into the system. Formally, these are called {\em allowed operations} which are completely-positive trace-preserving linear maps, $\mathcal{E}$, satisfying
\begin{equation}
\mathcal{E}\left[e^{-iLx}(\cdot)e^{iLx}\right]=e^{-iLx}\mathcal{E}(\cdot)e^{iLx}\quad
    \forall ~ x \in \mathbb{R};
    \label{allowedOperations}
\end{equation}
namely, they are translationally covariant with respect to $L$. If one considers only unitary dynamics $\mathcal{E}(\cdot)=U(\cdot)U^{\dagger}$, Eq.~\eqref{allowedOperations} simplifies to \hbox{$[U, e^{-iLx}] = 0$ $\forall~x \in \mathbb{R}$}, which holds if and only if $[U, L]=0$. Hence, a unitary $U$ gives an allowed operation, termed an {\em allowed unitary}, if and only if it is block diagonal with respect to the eigenspaces of $L$, i.e., \hbox{$V = \sum_{k} V_k \Pi_k$}, where each $V_k$ is an independent unitary and $\{\Pi_k\}_k$ are projectors onto the individual eigenspaces of $L$ such that $V_kV_l=\delta_{k,l} \Pi_k$. From here, it is easy to see that these allowed unitaries cannot generate coherence.

We can apply this approach to a bipartite system $AB$ by considering $L_{AB} = L_A \otimes \mathbb{I}_B + \mathbb{I}_A \otimes L_B$ (capital letters in subscripts denote separate spaces; $\mathbb{I}$ is the identity operator). If $[\rho_{AB}, L_{AB}]=0$, then the bipartite state $\rho_{AB}$ is incoherent with respect to the eigenspaces of $L_{AB}$; also, a bipartite allowed unitary $V_{AB}$ satisfies \hbox{$[V_{AB}, L_{AB}]=0$}
(see Supplementary Material (SM)~\ref{appendix:globalVsLocal} for further discussions). If $L_{AB}$ is non-degenerate, then all allowed unitaries are diagonal in the joint eigenbasis of $L_{AB}$. Such unitaries do not facilitate any interaction between the two subsystems, and no coherence concentration is possible. If $L_{AB}$ contains degeneracies, then the allowed unitaries are arbitrary unitaries within the degenerate subspaces of $L_{AB}$. Physically, this means that no matter what property $L_{AB}$ represents, it can be freely exchangeable between $A$ and $B$, as long as its total value is conserved. E.g., if $L_{AB}$ is the total Hamiltonian, then $[U_{AB}, L_{AB}]=0$ ensures the total energy is conserved under the dynamics $U_{AB}$. If one can move energy between $A$ and $B$ whilst conserving total energy, then coherence can be locally increased; otherwise, it cannot.  

From now, we consider our {\em local} physical observables to be the non-degenerate, truncated number operators $L_X = \sum_{n=0}^{d-1} n \ketbra{n}_X$ for $X=A,B$. This ensures that coherence cannot be locally increased, but that $L_{AB}$ contains degeneracies so that coherence concentration is possible. 

\section{Modes of Coherence}
Under translationally covariant operations, quantum states can be decomposed into a set of modes \cite{PhysRevA.90.062110}. Each mode is a set of components of the state that transform independently due to the symmetry of the covariant dynamics. Formally, the {\em $j$th mode} of a $d$-dimensional state $\rho$ is given by 
\begin{equation}
\rho^{\scriptscriptstyle(j)} \coloneqq \sum_{n=0}^{d-1}\proj{n+j} \rho \proj{n},
\end{equation}
where we take $\ket{n+j}=0$ if $n+j\ge d$. This is an operator supported in the {\em $j$th mode subspace} \hbox{$\mathcal{B}^{(j)} \coloneqq {\rm span}\{\ketbratwo{n+j}{n}\}_{n=0}^{d-1}$}. By definition, we have $-d+1\le j\le d-1$ and $\rho = \sum_{j=-d+1}^{d-1} \rho^{\scriptscriptstyle(j)}$. We note that since $\rho$ is a state, $[\rho^{(j)}]^{\dagger} = \rho^{(-j)}$ $\forall\,j$, meaning it suffices to focus on $\rho^{(j)}$'s with $j\ge0$. In addition, it can be seen that $\rho=\rho^{(0)}$ if and only if $\rho$ is incoherent, i.e., $[\rho, L]=0$. Finally, given two states $\rho$ and $\sigma$, if \hbox{$\sigma = \mathcal{E}(\rho)$} for some allowed operation $\mathcal{E}$, then $\sigma^{\scriptscriptstyle(j)} = \mathcal{E}(\rho^{\scriptscriptstyle(j)})$ $\forall\,j$.
Hence, under symmetry constrained dynamics, the output state's $j$th mode depends only on the input state's $j$th mode. 
Consequently, $\rho^{\scriptscriptstyle(j)}=0$ implies $\sigma^{\scriptscriptstyle(j)}=0$, which can be generalised to \hbox{${\rm modes}(\sigma) \subseteq {\rm modes}(\rho)$}, where ${\rm modes}(\rho)\coloneqq\{j\in\mathbb{N}\,|\,\rho^{(j)}\neq0\}$
is the set of all integers labelling modes for which $\rho$ has a nonzero component. 

In the bipartite setting, the global modes of a bipartite product state are given by a combination of local modes~\cite{PhysRevA.90.062110}:
$
(\rho_A \otimes \rho_B)^{\scriptscriptstyle(j)} = \sum_{k} \rho_A^{\scriptscriptstyle(k)} \otimes \rho_B^{\scriptscriptstyle(j-k)}.
$ It can further be seen that the $j$th mode subspace reads  
$
    \mathcal{B}_{AB}^{\modesuper{j}} \coloneqq {\rm span}\{ \ketbratwo{n+k, m+j-k}{n,m}_{AB}\,|\,n,m,k\in\mathbb{N}\}.
$
Using this, in Appendix A we show the following fact: 
\begin{lemma} \label{global_Modes_to_local_modes}
    Let $\rho_A \coloneqq {\rm tr}_B(\rho_{AB})$ be the reduced state of $\rho_{AB}$ in $A$. Then 
    $
        \rho_{A}^{\scriptscriptstyle(j)} = {\rm tr}_{B}\big( \rho_{AB}^{\scriptscriptstyle(j)} \big),
    $
    meaning the $j$th local mode depends only on the $j$th global mode. 
\end{lemma}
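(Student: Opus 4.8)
The plan is to unfold both sides using the explicit definition of the mode projection and the partial trace, and show they match term by term. First I would write the $j$th mode of the bipartite state explicitly: using the definition of the $j$th mode applied to the $d_A d_B$-dimensional system $AB$ with the product basis $\{\ket{n,m}_{AB}\}$, we have
\begin{equation}
\rho_{AB}^{\scriptscriptstyle(j)} = \sum_{k} \sum_{n,m} \proj{n+k,m+(j-k)}_{AB}\, \rho_{AB}\, \proj{n,m}_{AB},
\end{equation}
where the outer sum over $k$ arises because, for a bipartite system, a ``shift by $j$'' in the total label corresponds to a shift by $k$ in $A$ and $j-k$ in $B$ for every integer $k$; this is exactly the statement that $\mathcal{B}_{AB}^{\scriptscriptstyle(j)} = \bigoplus_k \mathcal{B}_A^{\scriptscriptstyle(k)}\otimes\mathcal{B}_B^{\scriptscriptstyle(j-k)}$, already recorded in the excerpt. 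Then I would apply $\mathrm{tr}_B$ to this expression.

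The key step is the computation of $\mathrm{tr}_B\big(\proj{m+(j-k)}_B \rho_{AB} \proj{m}_B\big)$ summed over $m$. Taking the partial trace over $B$ forces a resolution $\sum_{m'} \bra{m'}_B (\cdot) \ket{m'}_B$, and the factors $\bra{m'}\, m+(j-k)\rangle$ and $\langle m\,|m'\rangle$ collapse the sums: $\langle m | m'\rangle = \delta_{m,m'}$ kills the right projector unless $m'=m$, and then $\langle m' | m+(j-k)\rangle = \delta_{m',\,m+(j-k)}$ forces $j-k=0$, i.e.\ $k=j$. Hence only the $k=j$ term of the outer sum survives the partial trace, and what remains is precisely $\sum_{n,m} \proj{n+j}_A \otimes \langle m|_B\, \rho_{AB}\, |m\rangle_B \proj{n}_A = \sum_n \proj{n+j}_A\, \rho_A\, \proj{n}_A = \rho_A^{\scriptscriptstyle(j)}$, where in the middle step I used $\rho_A = \mathrm{tr}_B(\rho_{AB}) = \sum_m \langle m|_B \rho_{AB} |m\rangle_B$.

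The main obstacle — really the only subtlety — is bookkeeping the index conventions: making sure the boundary truncations ($\ket{n+j}=0$ when $n+j\ge d$, and likewise for $m+(j-k)$ on side $B$) are handled consistently when $k$ ranges over negative as well as positive integers, and confirming that the $B$-side truncation does not spuriously remove terms before the $k=j$ collapse happens. This is routine once one notes that for $k=j$ the $B$-side shift is zero and no truncation occurs there, so the surviving term is exactly the full $\rho_A^{\scriptscriptstyle(j)}$ with its own $A$-side truncation intact. I would close by remarking that the displayed identity immediately gives $\mathrm{modes}(\rho_A)\subseteq\mathrm{modes}(\rho_{AB})$, tying back to the mode-locality discussion preceding the lemma.
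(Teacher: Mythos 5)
Your proposal is correct and follows essentially the same route as the paper's Appendix~A proof: expand $\rho_{AB}^{\scriptscriptstyle(j)}$ over the bipartite mode basis, take ${\rm tr}_B$, observe that orthonormality forces $k=j$ so only the $B$-diagonal terms survive, and identify the surviving operator as $\rho_A^{\scriptscriptstyle(j)}$. Your version is marginally more explicit about why the $k\neq j$ terms vanish and about the boundary-truncation bookkeeping, but the argument is the same.
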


Given the number of available modes is a function of dimension, there are more available modes globally than locally. Hence, Lemma~\ref{global_Modes_to_local_modes} implies that tracing over a global mode with $|j|>d_A-1$ equals zero, with it making no contribution to the local modes in the $A$ subsystem, where $d_{AB}$ and $d_A$ be the dimensions of $AB$ and $A$, respectively. 

\section{Coherence Concentration Problems}
To study coherence concentration, we first quantify the coherence within a specific mode. 
To do this, we use \hbox{$M^{\scriptscriptstyle(j)}(\rho) \coloneqq \onenorm{\rho^{\scriptscriptstyle(j)}}$} \cite{PhysRevA.90.062110},
where $\onenorm{P}\coloneqq{\rm tr}\sqrt{PP^\dagger}$ is the {\em one norm} of the operator $P$. For all $j\neq0$, this family of measures are monotonic under allowed operations, and faithful i.e., \hbox{$ M^{\scriptscriptstyle(j)}(\rho) = 0~\forall j $ iif $[\rho, L]=0$}. It is also faithful for a specific mode: \hbox{$ M^{\scriptscriptstyle(j)}(\rho) = 0$ iif $\rho^{\modesuper{j}}=0$}. For a given integer $j\neq0$ and a state $\rho$, the key figure of merit we will be concerned with is 
\begin{equation}
    \Delta M^{\scriptscriptstyle(j)}_A \coloneqq \max_{V_{AB}}  M^{\scriptscriptstyle(j)}(\sigma_A)- M^{\scriptscriptstyle(j)}(\rho_A),
\end{equation}
which maximises over all allowed unitaries $V_{AB}$ ($[V_{AB}, L_{AB}]=0$) where \hbox{$\sigma_A \coloneqq { \rm tr}_B\big[V_{AB}(\rho_A \otimes \rho_B)V^\dagger_{AB} \big]$}. Our central question then amounts to asking: {\em for a given $j$, is \hbox{$ \Delta M^{\scriptscriptstyle(j)}_A \geq 0$}?}

\section{Optimal Qubit Coherence Concentration}
To better understand the physics, we first focus on performing coherence concentration using two qubits, where we can set the local generator to be \hbox{$L=\ketbra{1}$} without loss of generality. Each qubit has only two modes, with support on the $1$st mode, $\ketbratwo{1}{0}$, indicating the presence of coherence. This $1$st mode then only has a single component---the single off-diagonal element of $\rho$ --- which is the sole parameter of a qubit state that determines its coherence. Hence, any coherence measure on a qubit will depend only on this parameter. Therefore, whilst we consider $\Delta M^{\modesuper{j}}$, our results can be applied to {\em any} qubit coherence measure. 

Any qubit state $\rho$ can be completely characterised by $p_{00}\coloneqq\bra{0} \rho \ket{0}$ and $p_{01}\coloneqq\bra{0} \rho \ket{1}$, such that initially \hbox{$M^{\modesuper{1}}(\rho_A)=\vert p_{01} \vert$}. From here, we can explicitly write down the analytical form for the optimal coherence concentration possible (see Appendix B for sketched proof and SM.~\ref{proof section: result 1} for a full proof). 
\begin{theorem}\label{result_3:maxIncreaseQubits} 
For a qubit state $\rho$, we have
\begin{equation}
    \Delta M_A^{\modesuper{1}} = \vert p_{01} \vert ~ \bigg( \sqrt{1 + (2p_{00} - 1)^2} - 1 \bigg),
\end{equation}
which is achieved by applying a unitary rotation of angle $\cos^{-1}\left(1/\sqrt{1+(2p_{00}-1)^2}\right)$ in the $\{ \ket{01}, \ket{10} \}$ degenerate subspace of $L_{AB}$.  
\end{theorem}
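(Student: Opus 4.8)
The plan is to exploit the very simple eigenstructure of $L_{AB}$ in the two‑qubit case to reduce the optimisation over allowed unitaries to an optimisation over a single unit vector, which is then dispatched by Cauchy--Schwarz. First I would record that, for $L=\ketbra{1}$, the generator $L_{AB}=\ketbra{1}_A\otimes\mathbb{I}_B+\mathbb{I}_A\otimes\ketbra{1}_B$ has eigenspaces $\mathrm{span}\{\ket{00}\}$, $\mathrm{span}\{\ket{01},\ket{10}\}$ and $\mathrm{span}\{\ket{11}\}$ (eigenvalues $0,1,2$). Hence every allowed unitary $V_{AB}$ acts as a phase on $\ket{00}$ and on $\ket{11}$ and as an arbitrary $W\in\mathrm{U}(2)$ on the degenerate block $\mathrm{span}\{\ket{01},\ket{10}\}$; the two phases cannot affect $M^{\modesuper{1}}(\sigma_A)=\vert\bra{0}\sigma_A\ket{1}\vert$, so $\Delta M_A^{\modesuper{1}}$ is a maximisation over $W$ alone.

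Next I would compute the surviving matrix element. Writing $\rho_A=\rho_B=\rho$ with entries $p_{00},p_{01}$ and $\sigma_{AB}=V_{AB}(\rho\otimes\rho)V_{AB}^\dagger$, one has $\bra{0}\sigma_A\ket{1}=\bra{00}\sigma_{AB}\ket{10}+\bra{01}\sigma_{AB}\ket{11}$. Because $\ket{00}$ and $\ket{11}$ are fixed by $V_{AB}$, a short calculation expressing both terms through the action of $W$ on the degenerate block shows that $\bra{0}\sigma_A\ket{1}$ depends on $W$ only through the image $W\ket{+}$ of the symmetric vector $\ket{+}\coloneqq(\ket{01}+\ket{10})/\sqrt2$: writing $W\ket{+}=\psi_0\ket{01}+\psi_1\ket{10}$ one finds
\[
    M^{\modesuper{1}}(\sigma_A)=\sqrt2\,\vert p_{01}\vert\,\big\vert p_{00}\bar\psi_1+(1-p_{00})\psi_0\big\vert .
\]
Since $W\mapsto W\ket{+}$ is onto the unit sphere of the block, the problem becomes $\max_{\Vert\psi\Vert=1}\vert p_{00}\bar\psi_1+(1-p_{00})\psi_0\vert$, which by Cauchy--Schwarz equals $\sqrt{p_{00}^2+(1-p_{00})^2}$, attained at $(\psi_0,\psi_1)\propto(1-p_{00},p_{00})$. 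Using $2[p_{00}^2+(1-p_{00})^2]=1+(2p_{00}-1)^2$ gives $\max_{V_{AB}}M^{\modesuper{1}}(\sigma_A)=\vert p_{01}\vert\sqrt{1+(2p_{00}-1)^2}$, and subtracting $M^{\modesuper{1}}(\rho_A)=\vert p_{01}\vert$ yields the stated value of $\Delta M_A^{\modesuper{1}}$.

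Finally, to identify the optimal unitary I would check that the real rotation $W=\left(\begin{smallmatrix}\cos\theta & -\sin\theta\\ \sin\theta & \cos\theta\end{smallmatrix}\right)$ sends $\ket{+}$ to a vector proportional to $(1-p_{00},p_{00})$ exactly when $\tan\theta=2p_{00}-1$, i.e.\ $\theta=\cos^{-1}\!\big(1/\sqrt{1+(2p_{00}-1)^2}\big)$, which is the claimed rotation in the $\{\ket{01},\ket{10}\}$ subspace. The only genuine work is the bookkeeping in the middle step: tracking how $W$ enters the two non‑vanishing matrix elements and recognising that they collapse into a single inner product with $W\ket{+}$; once that is visible the optimisation is immediate. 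A secondary point worth one line is to confirm that the residual phases on $\ket{00}$ and $\ket{11}$ --- and the complex conjugate appearing on $\psi_1$ --- cannot improve on the real‑rotation optimum, which holds because only the modulus of the combination enters $M^{\modesuper{1}}$.
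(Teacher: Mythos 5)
Your proposal is correct and reaches the same optimum, but it organises the reduction differently from the paper, and the comparison is worth making. The paper writes the degenerate-block unitary in an Euler $Z$--$Y$--$Z$ decomposition with a global phase, then eliminates each factor separately: it computes explicitly that the phases on $\ket{00},\ket{11}$ give $|q_{01}|=|p_{01}(1-p_{00}+e^{i\omega_0}p_{00})|\le|p_{01}|$, that the $Z$-rotations leave $|q_{01}|$ invariant, and uses a commutation argument to reduce to the single $Y$-rotation, whose angle is then optimised by the same unit-vector/dot-product argument you use. You instead observe that $\bra{0}\sigma_A\ket{1}$ depends on the whole allowed unitary only through the image $\psi=W\ket{+}$ of the symmetric vector, and maximise the resulting linear functional over the unit sphere by Cauchy--Schwarz. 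This is cleaner: it handles the $Z$-rotations, the block phase, and the $\ket{00},\ket{11}$ phases in one stroke (they all just reparametrise $\psi$), and it makes transparent why the answer depends on $p_{00}$ only through $\sqrt{p_{00}^2+(1-p_{00})^2}$. I verified your bookkeeping: $\bra{0}\sigma_A\ket{1}=\sqrt{2}\,p_{01}\big[e^{i\omega_0}p_{00}\bar\psi_1+e^{-i\omega_1}(1-p_{00})\psi_0\big]$, which matches the paper's $p_{01}[\cos\theta+(2p_{00}-1)\sin\theta]$ when specialised to the real rotation, and $\tan\theta_{\rm opt}=2p_{00}-1$ recovers the stated angle.

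One imprecision to fix: your opening assertion that the phases on $\ket{00}$ and $\ket{11}$ \emph{cannot affect} $M^{\modesuper{1}}(\sigma_A)$ is literally false --- for fixed $W$ they generically \emph{decrease} it (this is exactly the paper's computation $|q_{01}|=|p_{01}(1-p_{00}+e^{i\omega_0}p_{00})|<|p_{01}|$). The correct statement, which your own formula supplies, is that $e^{i\omega_0}$ and $e^{-i\omega_1}$ can be absorbed into the components of the unit vector $\psi$ over which you are already maximising, so they cannot raise the supremum; your closing remark that ``only the modulus enters'' is not by itself the right justification. This is a one-line repair, not a gap in the argument.
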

Result~\ref{result_3:maxIncreaseQubits} implies that if $\rho$ has no initial coherence, i.e, $p_{01}=0$, then no coherence can be concentrated, as expected. Interestingly, it can also be seen that if $p_{00}=1/2$ (i.e., the diagonal of $\rho$ is maximally mixed in the eigenbasis of $L$), then no coherence can be concentrated. Put alternately, if $\rho$ sits on the plane through the equator of the Bloch sphere defined by $L$, then no coherence can be concentrated. Hence, qubit states having $p_{00}=1/2$ and $p_{01} > 0$ carry \textit{bound coherence}---they do not have concentratable coherence, despite having non-zero and non-maximal initial coherence~\cite{hsieh2024informationalnonequilibriumconcentration}.  

\section{Qubit Concatenation} 
Result~\ref{result_3:maxIncreaseQubits} fully solves the coherence concentration problem for two qubits. We now assume access to many copies of a qubit, but still aim to maximally increase the coherence of single qubit. Whilst this puts us outside of the resource concentration framework \cite{hsieh2024informationalnonequilibriumconcentration}, we still utilise Result~\ref{result_3:maxIncreaseQubits} to develop a multi-qubit resource amplification protocol. Specifically, we consider sequentially applying the optimal qubit concentration unitaries detailed in Result~\ref{result_3:maxIncreaseQubits}, feeding the output of two optimal concentration processes into the input of the next (see Appendix C for a schematic example). 

We equate such a concatenation protocol to a greedy optimisation algorithm: the locally optimal choice is always performed, potentially at the expense of reaching the global optimum. Performing a global operation on all available copies of $\rho$ simultaneously will be at least as powerful, and likely more efficient in terms of the local resource increase per number of copies of $\rho$ used. However, the approach presented here offers a fully constructive method for generating a multi-copy coherence amplification process, with all performed unitaries being effective qubit unitaries acting only with in the degenerate subspace of $L_{AB}$ \footnote{Treating subspaces as virtual qubits was shown to be an effective strategy for simplifying the description of dynamics in \cite{PhysRevE.85.051117}}.   

Assuming the input qubits are characterised by $p_{00}^{\recsuper{0}}$ and $p_{01}^{\recsuper{0}}$, after applying $m$ steps of this concatenation protocol, the output qubit state of the $(m+1)$th step will be characterised by
\begin{equation}
    \begin{split} \label{eq:recurrence relations}
        p_{00}^{\recsuper{m+1}} &= p_{00}^{\recsuper{m}} - \frac{2(2p^{\recsuper{m}}_{00}-1) \vert p^{\recsuper{m}}_{01} \vert ^2}{1+(2p^{\recsuper{m}}_{00}-1)^2}; \\
        p_{01}^{\recsuper{m+1}} &=  \vert p^{\recsuper{m}}_{01} \vert ~ \sqrt{1 + (2p^{\recsuper{m}}_{00} - 1)^2}, 
    \end{split}
\end{equation}
which are coupled, non-linear recurrence relations. See SM.~\ref{section: Recurrence Relations Formation} for a proof. Note, after the $m$th step, $2^m$ copies of $\rho$ will have been consumed. Hence, each additional layer leads to an exponential increase in the number of consumed states.

As a first step, we find the fixed points of these recurrence relations: values that remain unchanged when the recurrence is applied (i.e., $p_{00}^{\recsuper{m+1}} = p_{00}^{\recsuper{m}}$ and $p_{01}^{\recsuper{m+1}} = p_{01}^{\recsuper{m}}$). Physically, they represent qubit states for which no coherence can be concentrated, with $p_{00}^{\recsuper{m}}=1/2$ (part of the condition for bound coherence) and $p^{\recsuper{m}}_{01}=0$ (no initial coherence) both found to be fixed points. 

\begin{figure}
    \hspace*{-0.4cm}
    \includegraphics[scale=0.37]{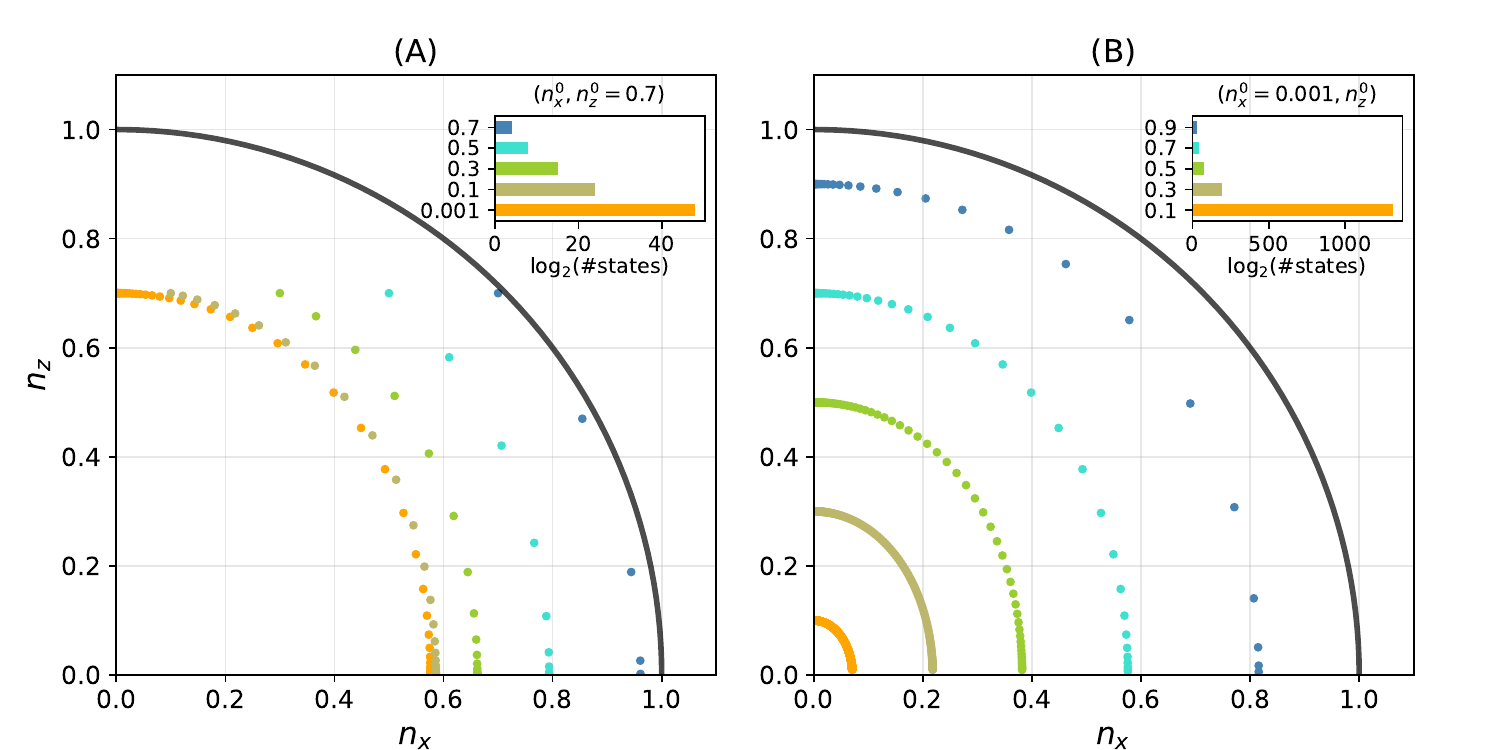}
    \caption{{\bf Qubit concatenation paths through the Bloch sphere.} (A) Paths for $n^0_z=0.7$ and various $n_x^0$. (B) Paths for $n^0_x=0.001$ and various $n_z^0$. Inserts show the $\log_2(\# {\rm ~input~ states})$---the number of steps---needed to converge to within $0.001$ of $n_z=0$, which dictates the maximum reachable coherence.}
    \label{fig:bloch_sphere_plots}
\end{figure}

By setting $p_{00} =(1+n_z)/2$ and $p_{01}=(n_x + in_y)/2$, where $\bm{n} = (n_x, n_y, n_z)$ is the Bloch vector, we plot the output states from each step on the Bloch sphere in Fig.~\ref{fig:bloch_sphere_plots}. This is done for multiple different input states. As $Z$-rotations are locally allowed, we can set $n_y=0$ without loss of generality, leaving $M^{\modesuper{1}}(\rho) = \vert n_x \vert$ to completely characterise the coherence of $\rho$. Since the $Z$-axis and $X$-axis are fixed points of the recurrence relations, we also only need to consider one quarter of the $n_y=0$ plane of the Bloch sphere, as the same behaviour will be replicated in the other quarters.   

It can be seen in Fig.~\ref{fig:bloch_sphere_plots} that even when given access to infinity many copies of $\rho$, the concatenation protocol still cannot reach a maximally coherent state ($n_x=1$) if $\rho$ is not a pure state. 
In fact, the maximum increase in coherence that can be achieved as $m \rightarrow \infty$ can be shown to depend on the purity of $\rho$:
\begin{theorem}\label{lemma:infinte_concat}
    For an initial qubit state $\rho$, let $\sigma_{(m)}$ be the qubit state output from the concatenation protocol's $m$th step.
    Then
    $
        M^{\recsuper{(1)}}(\sigma_{(m)}) \leq \sqrt{2 {\rm tr} \big(\rho^2\big) - 1}
        $ $
    $
\end{theorem}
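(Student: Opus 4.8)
The plan is to recast the claim as a monotonicity statement for the purity of the qubit along the concatenation recurrence of Eq.~\eqref{eq:recurrence relations}. Write $\sigma_{(0)}=\rho$, so that $\sigma_{(m)}$ is the state characterised by $p_{00}^{\recsuper{m}}$ and $p_{01}^{\recsuper{m}}$, and recall $M^{\modesuper{1}}(\sigma_{(m)})=|p_{01}^{\recsuper{m}}|$. The starting point is to introduce
\begin{equation}
J_m \;\coloneqq\; \big(2p_{00}^{\recsuper{m}}-1\big)^2 + 4\,|p_{01}^{\recsuper{m}}|^2 ,
\end{equation}
and to note the elementary qubit identity $J_m = 2\,{\rm tr}(\sigma_{(m)}^2)-1$ --- this is just $|\bm{n}|^2$ written in the Bloch parametrisation of the text. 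Two consequences follow at once: first, $M^{\modesuper{1}}(\sigma_{(m)})^2 = |p_{01}^{\recsuper{m}}|^2 \le J_m$, so it suffices to prove $J_m \le J_0 = 2\,{\rm tr}(\rho^2)-1$; second, because $J_m$ equals twice the purity of the $m$th output minus one, this is precisely the statement that each concatenation step cannot increase the qubit's purity.

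The one genuine computation is to substitute Eq.~\eqref{eq:recurrence relations} into $J_{m+1}$. Setting $a \coloneqq 4|p_{01}^{\recsuper{m}}|^2$ and $b \coloneqq (2p_{00}^{\recsuper{m}}-1)^2$, the two recurrences give $4|p_{01}^{\recsuper{m+1}}|^2 = a(1+b)$ and $(2p_{00}^{\recsuper{m+1}}-1)^2 = b\,(1+b-a)^2/(1+b)^2$; expanding $(1+b-a)^2-(1+b)^2$ and collecting terms then yields the clean factorisation
\begin{equation}
J_{m+1}-J_m \;=\; \frac{a\,b\,\big(a+b^2-1\big)}{(1+b)^2}.
\end{equation}
Here $a,b\ge 0$, and since $\sigma_{(m)}$ is a legitimate density operator one automatically has $b=(2p_{00}^{\recsuper{m}}-1)^2\le 1$ and $a+b=J_m=2\,{\rm tr}(\sigma_{(m)}^2)-1\le 1$; hence $a+b^2\le a+b\le 1$, the numerator is non-positive, and $J_{m+1}\le J_m$. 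Iterating from $m=0$ and combining with $M^{\modesuper{1}}(\sigma_{(m)})^2 \le J_m$ gives $M^{\modesuper{1}}(\sigma_{(m)}) \le \sqrt{2\,{\rm tr}(\rho^2)-1}$, as claimed.

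The step I expect to be the main (indeed, essentially only) obstacle is the algebraic simplification producing the factorisation of $J_{m+1}-J_m$: the identities for $p_{00}^{\recsuper{m+1}}$ and $p_{01}^{\recsuper{m+1}}$ must be arranged so that the sign of the difference becomes manifest. Everything else is bookkeeping --- notably, no auxiliary induction is needed to keep $J_m$ and $b$ below $1$, since both bounds hold for any valid qubit state, and these are exactly the two facts that force $a+b^2\le 1$ at every step.
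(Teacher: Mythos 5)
Your proof is correct and follows essentially the same route as the paper's: both arguments show that the squared Bloch-vector norm $J_m=2\,{\rm tr}(\sigma_{(m)}^2)-1$ is non-increasing under the recurrence (your factorisation $J_{m+1}-J_m = ab(a+b^2-1)/(1+b)^2$ is algebraically identical to the paper's expression for $\| \bm{n}^{m+1} \|_2^2-\| \bm{n}^m \|_2^2$, with $a=(n_x^m)^2$, $b=(n_z^m)^2$), and then bound the coherence by the Bloch norm. The only differences are notational ($p$-variables versus Bloch components) and the arrangement of the final sign argument, where your observation $a+b^2\le a+b\le 1$ is arguably cleaner than the paper's.
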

A proof sketch is given in Appendix D.
The above analysis reveals an interesting feature of coherence amplification.
From Eq.~\eqref{eq:recurrence relations}, and exemplified in Fig.~\ref{fig:bloch_sphere_plots}~(B), even for an initial qubit state with an {\em arbitrarily small amount} of coherence, coherence can still be amplified via the concatenation protocol. Unlike in the asymptotic and catalytic case \cite{PhysRevLett.132.180202, PhysRevA.103.022403}, this is not arbitrary amplification, as the achievable coherence is bounded by Result~\ref{lemma:infinte_concat}. However, there does exist states for which the {{\em ratio}} of the final to initial coherence is {\em unbounded}. In SM.~\ref{appendix:result:exp amplification_proof}, we prove that: 
\begin{theorem}\label{result:exp amplification}
For any {number of states $n=2^N$ where $N \in \mathbb{N}$} and $\epsilon>0$, there exists an initial state $\rho$ {(dependent on $n,\epsilon$)} such that
{
\begin{align}
\hspace{-0.2cm}
M^{\modesuper{1}}(\rho)<\frac{1}{n}\;\&\;
\frac{M^{\modesuper{1}}(\sigma_{(m)})}{M^{\modesuper{1}}(\rho)}>2^{-\epsilon} \sqrt{n}\;\forall\,m\ge \log_{2}(n).
\end{align} 
}
\end{theorem}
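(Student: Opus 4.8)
The plan is to exhibit the required state explicitly and follow its trajectory under the concatenation recurrence \eqref{eq:recurrence relations}, which I would first rewrite in convenient coordinates. Setting $q_m \coloneqq 2p_{00}^{\recsuper{m}}-1$ and $c_m \coloneqq \vert p_{01}^{\recsuper{m}}\vert = M^{\modesuper{1}}(\sigma_{(m)})$, Eq.~\eqref{eq:recurrence relations} becomes $c_{m+1} = c_m\sqrt{1+q_m^2}$ and $q_{m+1} = q_m\big(1 - 4c_m^2/(1+q_m^2)\big)$, and the quantity to control is $M^{\modesuper{1}}(\sigma_{(m)})/M^{\modesuper{1}}(\rho) = c_m/c_0 = \prod_{k=0}^{m-1}\sqrt{1+q_k^2}$. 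Two cheap facts I would record at the outset: since $\sqrt{1+q_k^2}\ge 1$ this ratio is non-decreasing in $m$, so it is enough to prove the bound at $m=N\coloneqq\log_2 n$ and then invoke monotonicity for all $m\ge N$; and, as long as the correction factor $1-4c_k^2/(1+q_k^2)$ stays nonnegative, $q_m$ remains in $(0,q_0]$, which gives the crude a priori estimate $c_k\le c_0\,2^{k/2}$.

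The state I would try is a near-pure state clustered near the ``north pole'': take $p_{01}^{\recsuper{0}}=\eta$ (real and positive, which is free since $Z$-rotations are allowed) and $p_{00}^{\recsuper{0}}=1-\eta$, so that $q_0 = 1-2\eta$ and $c_0=\eta$, where $\eta=\eta(n,\epsilon)>0$ is a small parameter to be fixed only at the end. The core estimate is that $q_m$ barely moves over the first $N$ steps. Feeding $c_k\le c_0\,2^{k/2}$ into the $q$-recurrence gives $4c_k^2/(1+q_k^2)\le 4\eta^2 2^k$, and then the Weierstrass inequality $\prod_j(1-a_j)\ge 1-\sum_j a_j$ (valid for $a_j\in[0,1]$) yields, for all $m\le N$,
\[ q_m \;\ge\; q_0\prod_{k=0}^{m-1}\big(1-4\eta^2 2^k\big)\;\ge\;q_0\big(1-4\eta^2(2^N-1)\big)\;\ge\;q_0\big(1-4\eta^2 n\big), \]
which is the step where everything hinges on the geometric sum $\sum_{k<N}2^k<n$ being only linear in $n$, so that an $\eta$ of order $1/\sqrt n$ keeps $q_m$ essentially frozen.

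Combining these, I would conclude
\[ \frac{M^{\modesuper{1}}(\sigma_{(N)})}{M^{\modesuper{1}}(\rho)}=\prod_{k=0}^{N-1}\sqrt{1+q_k^2}\;\ge\;\Big(1+q_0^2\,(1-4\eta^2 n)^2\Big)^{N/2}, \]
and since $q_0=1-2\eta\to 1$ and $4\eta^2 n\to0$ as $\eta\to0$, the right-hand side converges to $2^{N/2}=\sqrt n$. Because $2^{-\epsilon}\sqrt n<\sqrt n$, continuity in $\eta$ then supplies a threshold $\eta_0(n,\epsilon)>0$ below which the ratio exceeds $2^{-\epsilon}\sqrt n$. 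I would finally fix $\eta$ to be smaller than $\eta_0$, than $1/(2\sqrt n)$ (so the a priori bounds and the nonnegativity of the correction factor hold), and than $1/n$ (so that $M^{\modesuper{1}}(\rho)=\eta<1/n$); the monotonicity recorded above then upgrades the inequality from $m=N$ to every $m\ge\log_2 n$, finishing the argument.

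The only genuine difficulty I anticipate is the quantitative tug-of-war already flagged in the main text: each step amplifies $c$ by $\sqrt{1+q^2}$, which is near its maximum $\sqrt2$ only while $\vert q\vert$ is near $1$, yet the growth of $c$ simultaneously drags $q$ toward the equator through the second recurrence. The estimate above tames this over an $O(\log n)$-step horizon precisely because the total ``damage'' $\sum_{k<N}4\eta^2 2^k$ is $O(\eta^2 n)$, which a starting coherence $\eta\lesssim 1/\sqrt n$ renders negligible; I would not expect the crude bound $c_k\le c_0 2^{k/2}$ to remain efficient much past $m=\log_2 n$, but that is all the statement requires.
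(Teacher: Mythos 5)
Your argument is correct and follows essentially the same route as the paper's proof: both rest on the a priori bound $c_k\le c_0 2^{k/2}$ (the paper's $A_m\le 2^mA_0$), use it to show the total drift of $n_z$ over $N=\log_2 n$ steps is $O(c_0^2 n)$ and hence negligible once $c_0\lesssim 1/\sqrt n$, and then telescope $\prod_k\sqrt{1+q_k^2}\approx 2^{N/2}$. The only differences are in the bookkeeping — you package the ``$q$ barely moves'' step via the Weierstrass product inequality and a continuity-in-$\eta$ threshold for a one-parameter near-pure state, whereas the paper keeps $n_x^0$ and $n_z^0$ independent and runs an explicit geometric-series estimate — but these are cosmetic, and your version is, if anything, tidier.
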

Therefore, if $n$ is very large, then one has a large number of initial states with low initial coherence. Result~\ref{result:exp amplification} then states that $\log_2(n)$ number of steps of the concatenation protocol is sufficient to amplify the ratio of the coherence $n^{3/2}$ times. As $n$ can be arbitrarily large, the ratio of initial to final coherence can be unbounded when given access to arbitrarily many copies of the initial state. This finding is complementary to those reported in Ref.~\cite{PhysRevLett.132.180202, PhysRevA.103.022403}, where arbitrary amplifications of unspeakable coherence with unbounded rates are demonstrated in the asymptotic and catalytic transformation. Here, we show that, rather surprisingly, in a very limited setting where only single-qubit states and effective qubit unitaries are used, an unbounded input-output ratio can already be found.

\section{Coherence Concentration Beyond Qubits: Fundamental Limitations}
We now investigate the general concentration cases, considering states with arbitrary finite local dimension.
Given that modes transform independently under allowed operations, in conjugation with Lemma~\ref{global_Modes_to_local_modes}, one only needs to consider the $j$th global mode to concentrate coherence on the $j$th local mode. However, not all components of the global mode will contribute to the local mode. Specifically, one can separate $\mathcal{B}^{\modesuper{j}}_{AB}$ into two subspaces,
\begin{equation}
    \begin{split}
         \mathcal{V}_{\rm in}^{\modesuper{j}} &\coloneqq  {\rm span}\{ \ketbratwo{n+j, m}{n,m}
         \}_{n,m} \\
          \mathcal{V}_{\rm out}^{\modesuper{j}} &\coloneqq  {\rm span}\{ \ketbratwo{n+k, m+j-k}{n,m}
          \}_{n,m,k \neq j},
    \end{split}
\end{equation}
where $\mathcal{B}^{\modesuper{j}}_{AB} = \mathcal{V}^{\modesuper{j}}_{\rm in} \cup \mathcal{V}^{\modesuper{j}}_{\rm out}$ and $\mathcal{V}^{\modesuper{j}}_{\rm in} \cap \mathcal{V}^{\modesuper{j}}_{\rm out}=\emptyset$, such that the basis operators in $\mathcal{V}^{\modesuper{j}}_{\rm in}$ contribute to the local mode, whereas those in $\mathcal{V}_{\rm out}$ do not. The aim of mode concentration can therefore be loosely summarised as maximising the coefficients of the basis operators in $\mathcal{V}^{\modesuper{j}}_{\rm in}$ under allowed unitaries. It is noted that the projector onto $\mathcal{V}_{\rm in}^{\modesuper{j}}$ reads $\Pi_A^{\modesuper{j}} \otimes \Pi_B^{\modesuper{0}}$, where $\Pi^{\modesuper{k}}_{X}$ is the projector onto the $j$th local mode ($X=A,B$). 
Using this, we find the following upper bound on the ability to concentrate coherence in the $j$th local mode. 
\begin{theorem} \label{result:bound_1} Let $d^{(j)}\coloneqq{\rm dim}(\mathcal{V}^{\modesuper{j}}_{\rm in})$.
For any state $\rho$, we have
    \begin{equation}
           \Delta M_{A}^{\scriptscriptstyle(j)} \leq \vert \vert (\rho_A \otimes \rho_B)^{\scriptscriptstyle(j)} \vert \vert_{\text{${d}^{(j)}$}-{\rm KF}} - \vert \vert \rho^{\scriptscriptstyle(j)} \vert \vert_1.
    \end{equation}
\end{theorem}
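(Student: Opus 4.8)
The plan is to follow the $j$th mode through an allowed unitary and then estimate the trace norm of the resulting local mode by pairing it with a single low-rank partial isometry, which is exactly the object the Ky Fan norm controls. Fix an allowed unitary $V_{AB}$ and write $\omega\coloneqq\rho_A\otimes\rho_B$, $\sigma_{AB}\coloneqq V_{AB}\omega V_{AB}^{\dagger}$, $\sigma_A\coloneqq{\rm tr}_B(\sigma_{AB})$; assume $j\ge1$, the case $j\le-1$ following by taking adjoints since $M^{(j)}=M^{(-j)}$. Because $V_{AB}$ is allowed, modes transform covariantly, so $\sigma_{AB}^{(j)}=V_{AB}\,\omega^{(j)}\,V_{AB}^{\dagger}$, and by \lemref{global_Modes_to_local_modes} the local mode $\sigma_A^{(j)}={\rm tr}_B(\sigma_{AB}^{(j)})$ is supported only on the band $\{\ketbratwo{n+j}{n}\}_n$, with $\bra{n+j}\sigma_A^{(j)}\ket{n}=\sum_m\bra{n+j,m}\sigma_{AB}^{(j)}\ket{n,m}$. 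Such a banded matrix has trace norm equal to the sum of the moduli of its entries, so the triangle inequality gives
\begin{equation}
M^{(j)}(\sigma_A)=\lVert\sigma_A^{(j)}\rVert_1\ \le\ \sum_{n,m}\lvert\bra{n+j,m}\sigma_{AB}^{(j)}\ket{n,m}\rvert,
\end{equation}
the sum running over exactly the $d^{(j)}={\rm dim}(\mathcal V_{\rm in}^{(j)})$ index pairs labelling the basis operators of $\mathcal V_{\rm in}^{(j)}$.

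Next I would read off the right-hand side as a Ky Fan norm. Picking phases $\theta_{n,m}$ with $e^{i\theta_{n,m}}\bra{n+j,m}\sigma_{AB}^{(j)}\ket{n,m}=\lvert\bra{n+j,m}\sigma_{AB}^{(j)}\ket{n,m}\rvert\ge0$ and setting $K\coloneqq\sum_{n,m}e^{i\theta_{n,m}}\ketbratwo{n,m}{n+j,m}$, one checks that $K$ is a partial isometry with $\lVert K\rVert_\infty=1$ and ${\rm rank}(K)=d^{(j)}$ --- this uses that the pairs $(\ket{n+j,m},\ket{n,m})$ touch each ``row'' and each ``column'' vector at most once --- and by construction $\sum_{n,m}\lvert\bra{n+j,m}\sigma_{AB}^{(j)}\ket{n,m}\rvert={\rm tr}(\sigma_{AB}^{(j)}K)$. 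The variational identity $\lVert X\rVert_{d^{(j)}\text{-KF}}=\max\{\lvert{\rm tr}(XY)\rvert:\lVert Y\rVert_\infty\le1,\ {\rm rank}(Y)\le d^{(j)}\}$ together with unitary invariance of the Ky Fan norm then gives
\begin{equation}
{\rm tr}(\sigma_{AB}^{(j)}K)\ \le\ \lVert\sigma_{AB}^{(j)}\rVert_{d^{(j)}\text{-KF}}=\lVert V_{AB}\,\omega^{(j)}\,V_{AB}^{\dagger}\rVert_{d^{(j)}\text{-KF}}=\lVert(\rho_A\otimes\rho_B)^{(j)}\rVert_{d^{(j)}\text{-KF}}.
\end{equation}
This is uniform in $V_{AB}$, so maximising over allowed unitaries and subtracting $M^{(j)}(\rho_A)=\lVert\rho_A^{(j)}\rVert_1=\lVert\rho^{(j)}\rVert_1$ yields the claim.

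The step I expect to be the main obstacle is the bookkeeping behind the partial isometry $K$: one must confirm that the basis operators $\ketbratwo{n+j,m}{n,m}$ of $\mathcal V_{\rm in}^{(j)}$ really form a partial-permutation pattern on $\mathcal H_{AB}$, so that $K$ is an honest partial isometry rather than merely norm-bounded, and that the count of these pairs is exactly $d^{(j)}=(d_A-j)d_B$, so the rank constraint in the variational formula matches the index appearing in the statement. A related point to get right is that the sharper factor $\lVert(\rho_A\otimes\rho_B)^{(j)}\rVert_{d^{(j)}\text{-KF}}$ --- rather than $\lVert\rho_A\otimes\rho_B\rVert_{d^{(j)}\text{-KF}}$ --- hinges on $[V_{AB},L_{AB}]=0$ through $\sigma_{AB}^{(j)}=V_{AB}\omega^{(j)}V_{AB}^{\dagger}$, and that the reduction of the global mode to the local band relies on partial trace and mode projection commuting as asserted in \lemref{global_Modes_to_local_modes}.
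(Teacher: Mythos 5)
Your proposal is correct and follows essentially the same route as the paper: bound $M^{(j)}(\sigma_A)$ by the sum of the moduli of the $d^{(j)}$ entries of $\sigma_{AB}^{(j)}$ lying in $\mathcal{V}_{\rm in}^{(j)}$, then control that sum by the $d^{(j)}$-Ky-Fan norm via a rank-$d^{(j)}$ partial isometry and von Neumann's trace inequality (the paper packages this as Lemma~\ref{lemma_sum_coefficents} using the dual-norm characterisation, which is the same argument), and finish with mode covariance plus unitary invariance of the Ky-Fan norm. The bookkeeping you flag as a potential obstacle goes through exactly as you anticipate, since the pairs $(\ket{n+j,m},\ket{n,m})$ indeed form a partial-permutation pattern of size $d^{(j)}=(d_A-j)d_B$.
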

See Appendix E for a proof sketch. Here, $\| P \|_{\text{$\alpha$-\rm KF}}$ is the $\alpha$-{\em Ky-Fan norm}~\cite{Fan1951}, which is the sum of the $\alpha$ largest singular values of the operator $P$. Hence, as for informational non-equilibrium concentration \cite{hsieh2024informationalnonequilibriumconcentration}, the Ky-Fan norm emerges as a tool for quantifying how much of the global resource content of a state can be relocated into a subsystem. 

In Appendix F, we present an alternative (and more technical) bound on coherence concentration, which is sometimes tighter than the one reported in Result~\ref{result:bound_1} (see also SM.~\ref{appendix: bound comparision} for details). Moreover, see SM.~\ref{Section: qutrit example sup mat} for a qutrit example of both bounds. 

\section{The Role of Initial Correlations and No-Go Results for Coherence Concentration}
Finally, we expand the concentration problem beyond two initially uncorrelated copies, and ask when initial correlations can aid in local coherence enhancement, and when they cannot. 
In $AB$ with equal local dimension $d$, consider a state $
    \rho_{AB} = \rho_{AB}^{(0)} + \sum_{i \geq d}\rho_{AB}^{(i)}
$
such that there exists at least one $k\geq d$ where $\rho_{AB}^{\modesuper{k}} \neq 0$ and $\rho_{AB}^{\modesuper{k}}=0~\forall~ k \in [1,d-1]$. In SM.~\ref{appendix: A Sufficient Condition for The existence of Correlations}, we show that such a mode distribution implies correlations exist between $A$ and $B$. Now, applying any allowed operation to this state will map it to an output state supported on the same modes. Then, using Lemma~\ref{global_Modes_to_local_modes}, it can be concluded that both local states are supported only on the $0$th mode (as the partial trace over modes with $k\geq d$ is zero, making no contribution to the reduced states). Hence, both local states are always symmetric, regardless of the applied allowed operation. Therefore, despite the global state having coherence between the eigenspaces of $L_{AB}$, there exists no allowed operation---not just no allowed unitary---that can increase the local coherence. Furthermore, this argument is {\em independent} of the choice of coherence measure. We thus summarise this as the following no-go result.

\begin{theorem}
    If $[1, d-1] \cap {\rm modes}(\rho_{AB}) = {\{0\}}$, then there exists no allowed operation to concentrate coherence with respect to any coherence measure. 
\end{theorem}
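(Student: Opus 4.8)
The plan is to prove the statement by tracking how the mode structure of $\rho_{AB}$ propagates under an arbitrary allowed operation and then restricting attention to the reduced state via Lemma~\ref{global_Modes_to_local_modes}. The key observation is that the hypothesis forces every nonzero mode of $\rho_{AB}$ to lie either in mode $0$ or in a mode with $|j|\ge d$, and that modes of the latter type cannot survive the partial trace onto a $d$-dimensional subsystem. Crucially, the argument should be made for an arbitrary allowed CPTP map, not just the allowed unitaries appearing in $\Delta M_A^{\scriptscriptstyle(j)}$, so that a genuine no-go is obtained.

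Concretely, I would proceed as follows. First, recall that for any allowed operation $\mathcal{E}$ one has $\mathcal{E}(\rho)^{\scriptscriptstyle(j)}=\mathcal{E}(\rho^{\scriptscriptstyle(j)})$ for all $j$, hence $\mathrm{modes}(\mathcal{E}(\rho))\subseteq\mathrm{modes}(\rho)$; applied to a bipartite allowed operation $\mathcal{E}_{AB}$ with $\sigma_{AB}\coloneqq\mathcal{E}_{AB}(\rho_{AB})$ this gives $\mathrm{modes}(\sigma_{AB})\subseteq\mathrm{modes}(\rho_{AB})$, so by hypothesis $\sigma_{AB}^{\scriptscriptstyle(j)}=0$ for every $1\le |j|\le d-1$. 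Second, invoke Lemma~\ref{global_Modes_to_local_modes}: $\sigma_A^{\scriptscriptstyle(j)}=\mathrm{tr}_B(\sigma_{AB}^{\scriptscriptstyle(j)})$. For $1\le|j|\le d-1$ this vanishes by the previous step, while for $|j|\ge d$ it vanishes because $\sigma_A$ lives on a $d$-dimensional system, whose modes are labelled only by $|j|\le d-1$ (equivalently, the partial trace of any global mode with $|j|>d_A-1$ is zero, as noted directly after Lemma~\ref{global_Modes_to_local_modes}). Hence $\sigma_A=\sigma_A^{\scriptscriptstyle(0)}$, i.e.\ $[\sigma_A,L_A]=0$, so $\sigma_A$ is incoherent. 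Third, the same two steps applied to $\rho_{AB}$ itself (with no operation) show that the input reduced state $\rho_A=\rho_A^{\scriptscriptstyle(0)}$ is already incoherent. Finally, since both $\rho_A$ and $\sigma_A$ are incoherent, any coherence measure --- being a faithful monotone, and therefore vanishing on and minimised by incoherent states --- takes its minimal value on both; no concentration is possible, and since $\mathcal{E}_{AB}$ was an arbitrary allowed CPTP map this rules out all allowed operations, not merely allowed unitaries. The same conclusion clearly does not depend on which coherence measure is used.

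I do not expect a serious obstacle: the proof is a direct composition of the mode-covariance of allowed operations, Lemma~\ref{global_Modes_to_local_modes}, and the dimensional truncation of available local modes. The one point requiring care is the interaction of these ingredients --- one must separately check that the high modes ($|j|\ge d$) die under $\mathrm{tr}_B$ for dimensional reasons and that the intermediate modes ($1\le|j|\le d-1$) die because they were absent in $\rho_{AB}$ and allowed operations cannot create them --- together with being explicit that the conclusion is stated for all allowed operations. The companion assertion that this mode distribution in fact entails genuine correlations between $A$ and $B$ (so that the result reads as ``global correlations that cannot be converted into local coherence'') is a separate, more technical claim proved in the Supplementary Material, and it is not needed for the no-go itself.
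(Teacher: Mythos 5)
Your proposal is correct and follows essentially the same route as the paper: mode preservation under allowed operations, Lemma~\ref{global_Modes_to_local_modes} to pass to the local modes, and the dimensional truncation killing all modes with $|j|\ge d$ under the partial trace, so that both reduced states remain incoherent for any allowed operation and any coherence measure. Your additional explicit check that the input reduced state $\rho_A$ is itself already incoherent is a small but welcome clarification of the paper's phrase ``both local states are always symmetric.''
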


As an explicit example, consider the two-qubit isotropic state, 
$
    \rho_{AB}^{\rm Iso} = p \ketbra{\Phi^+}_{AB} + (1-p)\mathbb{I}/2,
$
where $0 \leq p \leq 1$ and $\ket{\Phi^+}_{AB} = (\ket{00} + \ket{11})_{AB}/\sqrt{2}$. As this state has support only on the $0$th and $2$nd bipartite modes, no coherence can be locally concentrated for all $p>0$, despite there existing correlation between $A$ and $B$ for all $p>0$.
This also means global entanglement---even maximal entanglement---may not always be exchangeable for local coherence, despite the ability to perform global allowed operations.  

\section{Discussion}
Here, we have studied the concentration problem for (unspeakable) coherence. A complete characterisation was found for two qubits, and upper bounds were given beyond qubits. 
From this, we detailed a completely constructive, multi-qubit coherence amplification protocol that uses only effective qubit unitaries involved. {It was shown to be able to unboundedly amplify the ratio of the input-output coherence for certain initial states---a finding that is complementary to the recently reported unbounded coherence amplification in the asymptotic and catalytic regime~\cite{PhysRevLett.132.180202}.} It would be instructive to quantitatively understand how far from the global optimal this protocol is, as it would inform whether the increased practicality of the concatenation protocol is worth the reduction in coherence concentrated. In addition, it would be interesting to introduce finitely many incoherent states into the concentration protocol, and assess whether more coherence enhancement can then be achieved given access to these incoherent states.

Finally, a family of states with bound-coherence was shown to exist, i.e., states that are not maximally coherent, nor incoherent, but for which no allowed unitary exists to perform coherence concentration. We leave it open to determine if such states exist in higher dimensions.

\section{Acknowledgement}
We thank Xueyuan Hu, Valerio Scarani, and Peter Sidajaya for fruitful discussions and comments. B.S.~acknowledges support from UK EPSRC (EP/SO23607/1). C.-Y.H.~acknowledges support from the Royal Society through Enhanced Research Expenses (NFQI) and the Leverhulme Trust Early Career Fellowship (``Quantum complementarity: a novel resource for quantum science and technologies'' with Grant No.~ECF-2024-310). PS Acknowledges support from a CIFAR Azrieli Global Scholarship.

\section*{Appendix}
\subsection{Appendix A: Proof of Lemma~\ref{global_Modes_to_local_modes}}
    The $j$th mode of the bipartite state $\sigma_{AB}$ can be written as
    \begin{equation}
        \sigma_{AB}^{\scriptscriptstyle(j)} = \sum_{k} \sum_{n,m} q_{n+k,m+j-k,n,m} \ketbratwo{n+k, m+j-k}{n,m}_{AB},
    \end{equation}
    which is an operator supported in $\mathcal{B}_{AB}^{(j)}$. Taking the partial trace over the $B$ system of this operator gives
    \begin{equation}
        \begin{split}
            {\rm tr}_B\big(\sigma_{AB}^{\scriptscriptstyle(j)}\big) &= \sum_{n} \Tilde{q}_{n+j,n} \ketbratwo{n+j}{n}_{A} \in \mathcal{B}_A^{\modesuper{j}},    
        \end{split}
    \end{equation}
    where $\Tilde{q}_{n+j,n} = \sum_m q_{n+j,m,n,m}$ and $\mathcal{B}_A^{\modesuper{j}}$ is the $j$th mode subspace of the $A$ system. From the definition of $\Tilde{q}_{n+j,n}$, it can be seen that the matrix elements associated to each basis state are those one would expect from taking the partial trace of the whole system, \hbox{$\Tilde{q}_{n+j,n} = \bra{n+j}  {\rm tr}_B\big(\sigma_{AB}\big) \ket{n}$}. As the operator exists only in $\mathcal{B}^{\modesuper{j}}_A$ and has the correct matrix elements, it is $\sigma_{A}^{\scriptscriptstyle(j)}$.  
\hfill$\square$

\subsection{Appendix B: Proof Sketch of Result~\ref{result_3:maxIncreaseQubits}}
In the two-qubit case, an allowed unitary applies an arbitrary qubit unitary in the $\big\{ \ket{01}, \ket{10} \big\}$ degenerate subspace. This unitary can be decomposed into two $Z$-rotations (with respect to $\ket{01}$ and $\ket{10}$), a $Y$-rotation, and some global phase. The $Z$-rotations and global phase can be shown to only decrease local coherence, leaving only the $Y$-rotation. One can apply this rotation to two copies $\rho$, and then use a geometric argument to find the optimal rotation angle. See SM.~\ref{proof section: result 1} for a full proof. \hfill$\square$

\begin{figure}[h!t]
    \centering
    \includegraphics[width=0.8\linewidth]{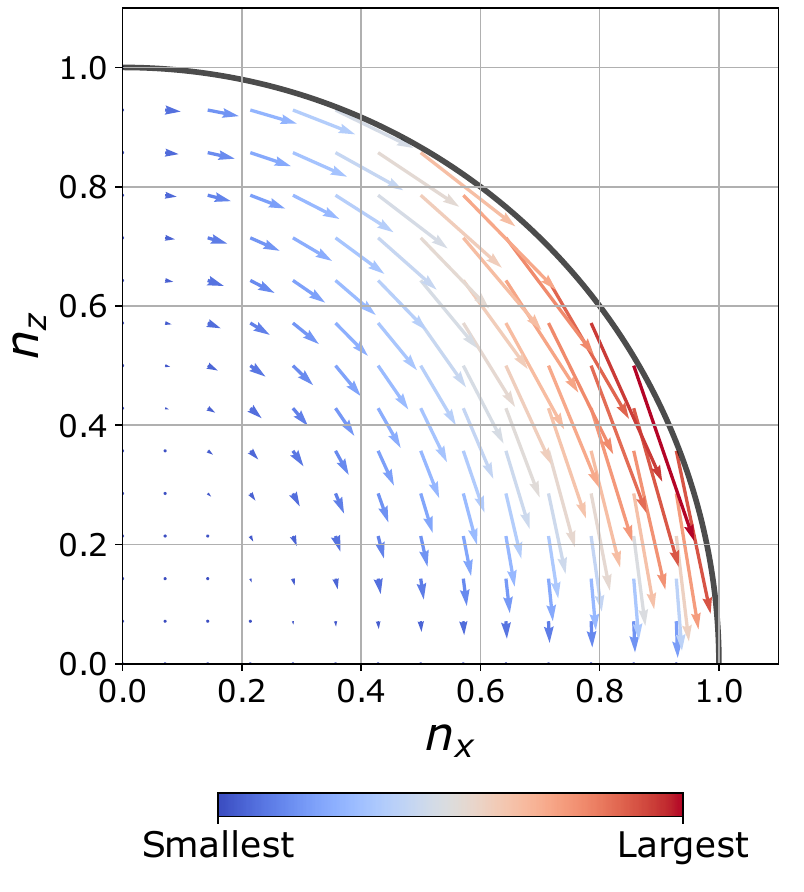}
    \caption{
    {\bf Vector field for the recurrence relations in Eq.~\eqref{eq:recurrence relations}.} The length of the vector and its colour detail how much a given point will change under the recurrence relation. For example, a short blue arrow means a small change will occur, a long red arrow means a large change will occur.
    }
    \label{fig:vector_field}
\end{figure}

\subsection{Appendix C: Details of Concatenation Protocol}
Here, a schematic example for a two-layer qubit concatenation protocol is presented:
\begin{center}
\begin{quantikz}
    \lstick{$\rho_A$} & \gate[2]{{U_{\rm opt}}(\rho)}                  &\ground{} \\
    \lstick{$\rho_B$} &                                        & \gate[2]{{U_{\rm opt}}(\sigma)} & \rstick{$\omega$}\\
    \lstick{$\rho_C$} & \gate[2]{{U_{\rm opt}}(\rho)} &                 &\ground{} \\
    \lstick{$\rho_D$} &                        & \ground{}
\end{quantikz}
\begin{tikzpicture}[overlay, remember picture]
  \draw[red, dashed, thick] 
    node[above, black] {$\sigma$}
    -- ++(0,-1.4cm);
\end{tikzpicture}
\end{center}
The protocol takes four copies of the state $\rho$ as input. A two qubit unitary, denoted by ${U_{\rm opt}}(\rho)$, is applied to the first and second copy, in spaces $A$ and $B$, and the third and fourth copies in spaces $C$ and $D$. Here, ${U_{\rm opt}}(\rho)$ is the optimal concentrating unitary for $\rho$; i.e., if $p_{00}^0 = \bra{0} \rho \ket{0}$ and $p_{01}^0 = \bra{0} \rho \ket{1}$, then 
\begin{equation}
    {U_{\rm opt}}(\rho) = \begin{bmatrix}
        1 & 0 & 0 & 0 \\
        0 & \cos{\theta_{\rm opt}} & -\sin{\theta_{\rm opt}} & 0 \\
        0 & \sin{\theta_{\rm opt}} & \cos{\theta_{\rm opt}} & 0 \\
        0 & 0 & 0 & 1
    \end{bmatrix}, \label{rotationDegSpace}
\end{equation}
where ${\theta_{\rm opt}}=\cos^{-1}\left(1/\sqrt{1+(2p_{00}-1)^2}\right)$.
Let \hbox{$\sigma_B=\textrm{tr}_A\big[{U_{\rm opt}}(\rho) (\rho_A \otimes \rho_B){U_{\rm opt}(\rho)^{\dagger}} \big]$} be the output state into space $B$ from the $1$st concentration step, which is also the output into space $C$. 
Let $p_{00}^1 = \bra{0} \sigma \ket{0}$ and $p_{01}^1 = \bra{0} \sigma \ket{1}$, which can be explicitly calculated via the recurrence relations in Eq.~\eqref{eq:recurrence relations}. From here, the optimal unitary for performing coherence concentration given two copies of $\sigma$, denoted by ${U_{\rm opt}}(\sigma)$, can be found and is of the same form as above. Applying ${U_{\rm opt}}(\sigma)$ to the states in wires $B$ and $C$ will output an even more coherent state $\hbox{$\omega_B=\textrm{tr}_A\big[{U_{\rm opt}}(\sigma) (\sigma_B \otimes \sigma_C){U_{\rm opt}(\sigma)^{\dagger}} \big]$}$. One can repeat this process an arbitrary number of times.

In Fig.~\ref{fig:vector_field}, we plot the vector field of the recurrence relations in Eq.~\eqref{eq:recurrence relations}. The increase in coherence in each step is smallest near the fixed points (along the $x$-axis and $z$-axis), and it can be seen that a smaller $n_x$ (closer to the $z$-axis) leads to smaller step sizes than a smaller $n_z$ (closer to the $x$-axis). 
It can also be seen in Fig.~\ref{fig:vector_field} that the change in $n_z$ and $n_x$ in a single step is highly dependent on the input state, with $n_x$ or $n_z$ close to the fixed points leading to smaller steps. When close to the $n_z$ axis, such that $n_x$ is close to zero, the small step size results from both input states into the concentration having low coherence, and hence there not being much coherence to relocate from system $B$ into system $A$. When close to the $n_x$ axis, however, the step sizes remain small, despite both input states having a (potentially) large amount of coherence. The small step size here instead arises from the fact that increasing coherence when the Bloch vector points predominantly along the $X$-axis is equivalent to increasing the purity of the state. Since it is impossible to increase the local purity given two qubits and unitary operations \cite{hsieh2024informationalnonequilibriumconcentration}, each step size is the small change in coherence that can be managed whilst not increasing the purity.  

\subsection{Appendix D: Proof Sketch of Result~\ref{lemma:infinte_concat}}
It can be shown that under the optimal concentrating unitary, the purity of the state in the $A$ subsystem is reduced. Hence, under each step of the concatenation protocol, the purity is reduced. 
Relating the Bloch vector to purity via its 2-norm and using the formula of $M^{(1)}(\rho)$ completes the proof. See SM.~\ref{appendix:lemma_2_proof} for a full proof. \hfill$\square$

\subsection{Appendix E: Proof Sketch of Result~\ref{result:bound_1}}
The figure of merit $M_A^{\scriptscriptstyle(j)}(\sigma_A)$ can be seen to be upper bounded by the sum of the absolute values of the matrix elements of $\sigma_{AB}^{\scriptscriptstyle(j)}$ associated to basis operators in $\mathcal{V}_{\rm in}^{\modesuper{j}}$. All these matrix elements sit along a diagonal of length $\Tilde{d}$. It can separately be shown that for any operator, the sum of the absolute values of the matrix elements along any diagonal of length $k$ is upper-bounded by the $k$th Ky-Fan norm of the operator. Using the unitary invariance of the Ky-Fan norm and taking away $M^{\scriptscriptstyle(j)}(\rho_A)$ completes the proof. See SM.~\ref{appendix:result_1_proof} for a full proof.
\hfill$\square$

\subsection{Appendix F: Separating Mode Evolution and Bounds On Coherence Concentration}
Here, we present an alternative bound on coherence concentration. We achieve this by noting that a global mode can be separated further into independently transforming components. 
Within the $j$th mode, we introduce the \hbox{$(\lambda_c+j,\lambda_c)$} {\em left-right-degenerate} (LR-D) operator of the mode as $\Pi_{c+j} (\rho \otimes \rho)^{\modesuper{j}} \Pi_{c}$, where $\Pi_{c}$ is the projection onto the eigenspace of $L_{AB}$ with eigenvalue $\lambda_c$ (i.e., $L_{AB} = \sum_c\lambda_c\Pi_c$ with spectrum $\{\lambda_c\}_c$). Under any bipartite allowed unitary $V_{AB}$, the $(\lambda_c+j, \lambda_c)$ LR-D operator of the $j$th mode then evolves independently as
$
    V_{c+j} \big( \Pi_{c+j} (\rho \otimes \rho)^{\modesuper{j}} \Pi_{c} \big) V_{c},
$
where $V_{c}$ is the unitary that acts within the $c$th degenerate subspace of $L_{AB}$. 
One can then perform a similar analysis to Result~\ref{result:bound_1} and define the $(\lambda_c+j, \lambda_c)$th LR-D subspace, $\mathcal{B}^{\modesuper{j,c}}_{AB}$, and  
\begin{equation}
    \begin{split}
         \mathcal{V}_{\rm in}^{\modesuper{j,c}} &\coloneqq  
         {\rm span}\{ \ketbratwo{n+j, m}{n,m}
         \}_{n+m=c } \\
         \mathcal{V}_{\rm out}^{\modesuper{j,c}} &\coloneqq  
         {\rm span}\{ \ketbratwo{n+k, m+j-k}{n,m}
         \}_{n+m=c,~k \neq j}
    \end{split}
\end{equation}
with $\mathcal{B}^{\modesuper{j,c}}_{AB} = \mathcal{V}_{\rm in}^{\modesuper{j,c}} \cup \mathcal{V}_{\rm out}^{\modesuper{j,c}}$ and $\mathcal{V}_{\rm in}^{\modesuper{j,c}} \cap \mathcal{V}_{\rm out}^{\modesuper{j,c}}=\emptyset$. 
The aim of mode concentration is again to maximise the coefficients associated with the basis operators in $\mathcal{V}_{\rm in}^{\modesuper{j,c}}$.
Applying this approach leads to the following upper bound:
\begin{theorem} \label{result:bound_2} 
Let $d^{(j,c)} \coloneqq {\rm dim}(\mathcal{V}_{\rm in}^{\modesuper{j,c}})$.
For any state $\rho$, we have
\begin{equation}
     \Delta M_{A}^{\scriptscriptstyle(j)} \leq \sum_{c=0}^{2d-2-j} \| \Pi_{c+j} (\rho \otimes \rho)^{\scriptscriptstyle(j)} \Pi_{c} \|_{\text{$d^{(j,c)}$}-{\rm KF}} - \onenorm{\rho^{\modesuper{j}}}.
\end{equation}
\end{theorem}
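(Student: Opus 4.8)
The plan is to rerun the argument behind Result~\ref{result:bound_1} (sketched in Appendix E) \emph{blockwise} with respect to the eigenspace decomposition $L_{AB}=\sum_c\lambda_c\Pi_c$. The starting point is the structural fact recalled above: the $j$th mode is a direct sum of LR-D operators, $(\rho\otimes\rho)^{(j)}=\sum_c\Pi_{c+j}\,(\rho\otimes\rho)^{(j)}\,\Pi_c$, because the basis operator $\ketbratwo{n+k,m+j-k}{n,m}$ carries left $L_{AB}$-eigenvalue $n+m+j$ and right eigenvalue $n+m$; and under any allowed unitary $V_{AB}=\sum_d V_d\Pi_d$ each summand evolves within its own block as $\Pi_{c+j}\,\sigma_{AB}^{(j)}\,\Pi_c=V_{c+j}\big(\Pi_{c+j}(\rho\otimes\rho)^{(j)}\Pi_c\big)V_c^{\dagger}$, with $\sigma_{AB}^{(j)}=V_{AB}(\rho\otimes\rho)^{(j)}V_{AB}^{\dagger}$.

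First I would unfold the figure of merit. By Lemma~\ref{global_Modes_to_local_modes}, $\sigma_A^{(j)}={\rm tr}_B\big(\sigma_{AB}^{(j)}\big)$, and since $\sigma_A^{(j)}$ is supported on the single diagonal $\mathcal{B}_A^{(j)}$ one has $M^{(j)}(\sigma_A)=\|\sigma_A^{(j)}\|_1=\sum_n\big|\bra{n+j}\sigma_A^{(j)}\ket{n}\big|=\sum_n\big|\sum_m\bra{n+j,m}\sigma_{AB}^{(j)}\ket{n,m}\big|\le\sum_{n,m}\big|\bra{n+j,m}\sigma_{AB}^{(j)}\ket{n,m}\big|$, the last step being the triangle inequality; the resulting quantity is exactly the sum of the moduli of the coefficients of $\sigma_{AB}^{(j)}$ on the basis operators spanning $\mathcal{V}_{\rm in}^{(j)}$.

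The refinement over Result~\ref{result:bound_1} then comes from partitioning that sum according to the block $c=n+m$ to which each term belongs, using $\mathcal{V}_{\rm in}^{(j)}=\bigsqcup_c\mathcal{V}_{\rm in}^{(j,c)}$. For fixed $c$, the entries $\{\bra{n+j,m}\sigma_{AB}^{(j)}\ket{n,m}\}_{n+m=c}$ all lie in the rectangular block $\Pi_{c+j}\sigma_{AB}^{(j)}\Pi_c$, and as $n$ (hence $m=c-n$) varies they occupy pairwise-distinct rows and pairwise-distinct columns of that block --- a generalised diagonal of length $d^{(j,c)}=\dim\mathcal{V}_{\rm in}^{(j,c)}$. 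Invoking the same lemma used in Appendix E (the sum of the moduli of the entries on any length-$k$ generalised diagonal of an operator is at most its $k$-Ky-Fan norm) gives, for each $c$, $\sum_{n+m=c}\big|\bra{n+j,m}\sigma_{AB}^{(j)}\ket{n,m}\big|\le\|\Pi_{c+j}\sigma_{AB}^{(j)}\Pi_c\|_{d^{(j,c)}\text{-}\mathrm{KF}}=\|\Pi_{c+j}(\rho\otimes\rho)^{(j)}\Pi_c\|_{d^{(j,c)}\text{-}\mathrm{KF}}$, where the equality uses the blockwise unitary evolution above together with unitary invariance of the Ky-Fan norm. Summing over the nonempty blocks $c=0,\dots,2d-2-j$, the right-hand side no longer depends on $V_{AB}$; maximising the left side over allowed $V_{AB}$ and subtracting $M^{(j)}(\rho_A)=\|\rho^{(j)}\|_1$ gives the claim.

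The routine part is the bookkeeping: determining the exact range of $c$ for which $\mathcal{V}_{\rm in}^{(j,c)}\neq\{0\}$, and verifying that the row labels $\ket{n+j,m}$ and column labels $\ket{n,m}$ of a fixed block are pairwise distinct so the chosen entries genuinely form a generalised diagonal. The \textbf{main obstacle} --- really the one conceptual point to get right --- is the compatibility of the block decomposition with the diagonal-sum/Ky-Fan lemma: one must check that slicing the single global diagonal of $\mathcal{V}_{\rm in}^{(j)}$ by eigenvalue $c$ leaves, inside each block, a \emph{bona fide} generalised diagonal, which is what licenses applying the lemma per block and then summing. This is also why the bound can be strictly tighter than Result~\ref{result:bound_1}: since $\sum_c d^{(j,c)}=d^{(j)}$ but the $d^{(j)}$ largest singular values of the direct sum $\bigoplus_c\Pi_{c+j}(\rho\otimes\rho)^{(j)}\Pi_c$ need not be distributed as $d^{(j,c)}$ per block, one always has $\sum_c\|\Pi_{c+j}(\rho\otimes\rho)^{(j)}\Pi_c\|_{d^{(j,c)}\text{-}\mathrm{KF}}\le\|(\rho\otimes\rho)^{(j)}\|_{d^{(j)}\text{-}\mathrm{KF}}$. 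A detailed proof and this comparison are given in SM.~\ref{appendix: bound comparision}.
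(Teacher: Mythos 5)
Your proof is correct and takes essentially the same route as the paper's: establish that each LR-D block $\Pi_{c+j}(\rho\otimes\rho)^{\scriptscriptstyle(j)}\Pi_c$ evolves independently under an allowed unitary as $V_{c+j}(\cdot)V_c^{\dagger}$, partition the coefficients of $\mathcal{V}_{\rm in}^{\scriptscriptstyle(j)}$ by the block index $c=n+m$, and apply Lemma~\ref{lemma_sum_coefficents} blockwise together with unitary invariance of the Ky-Fan norm before summing over $c$. One remark on your closing aside (not needed for the proof itself): your claim that $\sum_c\|\Pi_{c+j}(\rho\otimes\rho)^{\scriptscriptstyle(j)}\Pi_c\|_{d^{(j,c)}\text{-}{\rm KF}}\leq\|(\rho\otimes\rho)^{\scriptscriptstyle(j)}\|_{d^{(j)}\text{-}{\rm KF}}$ always holds does appear to follow from the singular values of the mode operator being the disjoint union of the blocks' singular values together with $\sum_c d^{(j,c)}=d^{(j)}$, but it directly contradicts the paper's numerical finding in SM~\ref{appendix: bound comparision} that the two bounds are incomparable, so that discrepancy would need to be resolved before asserting it.
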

The proof technique is the same as for Result~\ref{result:bound_1}, but now applied individually to each LR-D subspace. See SM.~\ref{appendix:result_2_proof} for the full proof.
Interestingly, in SM.~\ref{appendix: bound comparision}, 
we show that bounds from Results~\ref{result:bound_1} and~\ref{result:bound_2} are incomparable---each provides a tighter upper bound in a variety of cases considered. 

\bibliographystyle{apsrev4-1}
\bibliography{mainTextBib}

\clearpage

\onecolumngrid
\begin{center}
    \Large \bfseries {Supplementary Material: Unspeakable Coherence Concentration}
\end{center}
\vspace{1em}

\let\addcontentsline\oldaddcontentsline %

\begingroup
\parskip=0pt
\setcounter{tocdepth}{2}
\tableofcontents
\endgroup


\section{Relationship between Local and Global Symmetry\label{appendix:globalVsLocal}}

A few observations can be made about the relationship between the local and global (bipartite) symmetry and, hence, the local and global coherence. Firstly, we prove the following Lemma.  
\begin{lemma}[Global Symmetry Implies Local Symmetry]\label{lemma: global vs local symmetry}
Consider a finite-dimensional bipartite system $AB$.
    Let $\rho_{AB}$ be a bipartite state in $AB$ and $L_A$ and $ L_B$ be Hermitian operators acting on systems $A$ and $B$, respectively. Then 
    \begin{equation}
        [\rho_{AB}, L_{AB}] = 0 \implies [\rho_A, L_A] = 0,
    \end{equation}
    where $\rho_A = {\rm tr}_B\big( \rho_{AB} \big) $ and $L_{AB} = L_A \otimes \mathbb{I}_B + \mathbb{I}_A \otimes L_B$, where $L_A, L_B$ are non-degenerate, truncated number operators.  
\end{lemma}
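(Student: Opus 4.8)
The plan is to exploit the translation-invariance reformulation of the commutation condition already recorded in the main text: $[\rho_{AB}, L_{AB}] = 0$ is equivalent to $e^{-iL_{AB}x}\rho_{AB}e^{iL_{AB}x} = \rho_{AB}$ for all $x \in \mathbb{R}$. Since $L_{AB} = L_A \otimes \mathbb{I}_B + \mathbb{I}_A \otimes L_B$ is a sum of commuting terms, the one-parameter group it generates factorizes as $e^{-iL_{AB}x} = e^{-iL_A x} \otimes e^{-iL_B x}$.

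First I would apply ${\rm tr}_B$ to both sides of the invariance condition. Using the identity ${\rm tr}_B\big[(\mathbb{I}_A \otimes U_B)\,X_{AB}\,(\mathbb{I}_A \otimes U_B^\dagger)\big] = {\rm tr}_B[X_{AB}]$, which is a direct consequence of the cyclicity of the trace over $B$, the $B$-side exponentials cancel, while the $A$-side exponentials factor out of the partial trace because they act trivially on $B$. This gives $e^{-iL_A x}\rho_A e^{iL_A x} = \rho_A$ for all $x \in \mathbb{R}$, which is precisely $[\rho_A, L_A] = 0$ (differentiate at $x=0$, or invoke the equivalence already stated in the main text).

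As an alternative, purely algebraic route, I would write $L_A = \sum_n n\,\ketbra{n}_A$ and $L_B = \sum_m m\,\ketbra{m}_B$, so that $L_{AB}$ is diagonal with eigenvalue $n+m$ on $\ket{n,m}_{AB}$. Sandwiching $[\rho_{AB}, L_{AB}] = 0$ between $\bra{n,m}$ and $\ket{n',m'}$ yields $(n'+m'-n-m)\bra{n,m}\rho_{AB}\ket{n',m'} = 0$, so every matrix element with $n+m \neq n'+m'$ vanishes. Then $\bra{n}\rho_A\ket{n'} = \sum_m \bra{n,m}\rho_{AB}\ket{n',m}$ is a sum of terms each requiring $n+m = n'+m$, i.e.\ $n = n'$; hence $\rho_A$ is diagonal in the eigenbasis of $L_A$, giving $[\rho_A, L_A] = 0$. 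A third option is to phrase everything in the mode language: $[\rho_{AB}, L_{AB}]=0$ means $\rho_{AB} = \rho_{AB}^{\scriptscriptstyle(0)}$, and Lemma~\ref{global_Modes_to_local_modes} then forces $\rho_A^{\scriptscriptstyle(j)} = {\rm tr}_B(\rho_{AB}^{\scriptscriptstyle(j)}) = 0$ for all $j\neq 0$, i.e.\ $\rho_A = \rho_A^{\scriptscriptstyle(0)}$.

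There is no serious obstacle here; the only point requiring a little care is the partial-trace identity that removes the $B$-side conjugation (and, in the algebraic route, bookkeeping of which matrix elements survive the partial trace). It is worth remarking that the argument never uses the non-degeneracy of $L_A$ and $L_B$ — it holds for any additive $L_{AB}$ — so that hypothesis is carried only for consistency with the conventions fixed elsewhere in the paper.
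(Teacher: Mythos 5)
Your proposal is correct, and your second (``purely algebraic'') route is essentially the paper's own proof: the paper likewise expands $\rho_{AB}$ in the joint eigenbasis of $L_{AB}$, observes that $[\rho_{AB},L_{AB}]=0$ kills every matrix element $\bra{n,s}\rho_{AB}\ket{m,t}$ with $\lambda_n+\lambda_s\neq\lambda_m+\lambda_t$, and then notes that the partial trace sets $s=t$, forcing $\lambda_n=\lambda_m$ and hence block-diagonality of $\rho_A$. Your leading route is genuinely different and arguably cleaner: rather than tracking matrix elements, you use the factorisation $e^{-iL_{AB}x}=e^{-iL_Ax}\otimes e^{-iL_Bx}$ together with the covariance of ${\rm tr}_B$ under unitaries of the form $\mathbb{I}_A\otimes U_B$, which yields $e^{-iL_Ax}\rho_A e^{iL_Ax}=\rho_A$ for all $x$ in two lines. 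This buys basis-independence and makes manifest your correct observation that the non-degeneracy of $L_A,L_B$ is never used --- a point the paper's statement obscures by simultaneously calling $L_A,L_B$ arbitrary Hermitian operators and non-degenerate truncated number operators. Your third route (via the modes and Lemma~\ref{global_Modes_to_local_modes}) is also valid and is the one most consonant with the rest of the paper's machinery, though one should note it implicitly relies on the same partial-trace computation. All three arguments are sound; there is no gap.
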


\begin{proof}
    As $L_X$ is a Hermitian (linear) operator ($X=A,B$), its eigenstates form an orthonormal basis, such that $L_X$ can be written as \hbox{$L_X=\sum_{n} \lambda_n^{(X)} \ketbra{n}_X$,} where \hbox{$\braket{n|m} = \delta_{nm}$} $\forall\,n,m$. The bipartite operator $L_{AB}$ can then be written in the joint eigenbasis as 
    \begin{equation}
        L_{AB} = \sum_{i,j} \left(\lambda_n^{(A)} + \lambda_m^{(B)}\right) \ketbra{n}_A \otimes \ketbra{m}_B.
    \end{equation}
    As $[\rho_{AB}, L_{AB}]=0$, $\rho_{AB}$ is block diagonal with respect to the eigenspaces of $L_{AB}$, namely,
    \begin{equation}
        \rho_{AB} = \sum_{n,m,s,t} p_{nmst} \ketbratwo{n}{m}_A \otimes \ketbratwo{s}{t}_B, ~ ~ {\rm where } ~ ~ p_{nmst} > 0 ~ ~ \text{ if and only if} ~ ~ \lambda_n + \lambda_s = \lambda_m + \lambda_t,
    \end{equation}
    and $p_{nmst} = \bra{ns} \rho_{AB} \ket{mt}$. The reduced state in system $A$ is then 
    \begin{equation}
        \rho_A = { \rm tr}_B\big( \rho_{AB} \big) = \sum_{n,m,t} p_{nmtt} \ketbratwo{n}{m}_A.
    \end{equation}
    From here, it can be seen that $p_{nmtt}>0$ if and only if $\lambda_n = \lambda_m$. 
    In particular, this already implies that $\rho_A$ is block diagonal with respect to the eigenbasis of $L_A$, meaning that $[\rho_A, L_A]=0$ and completing the proof.
\end{proof}

Lemma~\ref{lemma: global vs local symmetry} means that global symmetry implies local symmetry.
However, the reverse is not true: the presence of local symmetry does not imply global symmetry.
To see this, consider \hbox{$\rho_{AB} = \ketbra{\Phi^+}_{AB}$} with \hbox{$\ket{\Phi^+}_{AB} = (1/\sqrt{2})(\ket{00}+\ket{11})_{AB}$}, and \hbox{$L_X=\lambda_0\ketbra{0}_X + \lambda_1 \ketbra{1}_X$} for $\lambda_0 \neq \lambda_1$ ($X=A,B$). Locally both systems are maximally mixed and hence symmetric, while the global state is not symmetric with respect to $L_{AB}$. 
In the specific case that $\rho_{AB}$ is a product state, $\rho_{AB}=\rho_A \otimes \rho_B$, then local symmetry in both subsystem ($[\rho_A, L_A]=0$ and $[\rho_B, L_B]=0$) does imply global symmetry ($[\rho_A \otimes \rho_B, L_{AB}]=0$). This can then be extended to hold for any convex mixture of product states that are locally symmetric.
Using these ideas, we now argue that one cannot exploit the differing global and local symmetries in the concentration scenario to increase the local coherence unless there is some initial local coherence present. If $\rho_{AB}$ is a product state and both local systems contain no coherence, then $\rho_{AB}$ is incoherent. Since $V_{AB}$ cannot increase coherence, $V_{AB}(\rho_A \otimes \rho_B)V_{AB}^\dagger$ must also be incoherent. Hence, according to Lemma~\ref{lemma: global vs local symmetry}, the output local systems will be incoherent, too.

{
\section{Proofs of main results \label{appendix:bounds}}
}

\subsection{Proof of Result~\ref{result_3:maxIncreaseQubits}}\label{proof section: result 1}

{We first state Result~\ref{result_3:maxIncreaseQubits} here for clarity.}
\\
\\
\textbf{Result~\ref{result_3:maxIncreaseQubits}}~\textit{Given two copies of a qubit $\rho \in \mathcal{H}^2$, then 
\begin{equation}
    \max_{V_{AB}} ~ \Delta M_A^{\modesuper{1}} = \vert p_{01} \vert ~ \bigg( \sqrt{1 + (2p_{00} - 1)^2} - 1 \bigg),
\end{equation}
where the maximum is over all resource non-increasing unitaries, with the optimal achieved by applying a rotation of angle $(\sqrt{1+(2p_{00}-1)^2})^{-1}$ in the degenerate subspace. }

\begin{proof}
    Firstly, we note that an arbitrary single-qubit state $\rho$ can be written as
    \begin{equation}
        \rho = \begin{bmatrix}
            p_{00} & p_{01} \\
            p_{01}^* & 1-p_{00}
        \end{bmatrix},
    \end{equation}
    where the decomposition is with respect to the eigenbasis of $L$, and $p_{00}=\bra{0} \rho \ket{0}, p_{01}=\bra{0} \rho \ket{1}$.  
    To two copies of $\rho$, we apply an allowed unitary $V_{AB}$ on it (i.e., $[V_{AB}, L_{AB}] = 0$). The most general form of such a unitary is 
    \begin{equation}
         V_{AB} = \begin{bmatrix}
        e^{i \omega_0} & 0 & 0 & 0 \\
        0 & U_{00} & U_{01} & 0 \\
        0 & U_{10} & U_{11} & 0 \\
        0 & 0 & 0 & e^{i \omega_1}
    \end{bmatrix}, 
    \end{equation}
    where $0 \leq \omega_i \leq 2 \pi$ for $i \in \{0,1\}$ and
    \begin{equation}
       \begin{split}
{       \begin{bmatrix}
            U_{00} & U_{01} \\
            U_{10} & U_{11}
        \end{bmatrix} }
        &= e^{-i \alpha} \begin{bmatrix}
            e^{-i \phi_1 } & 0 \\
            0 & e^{i \phi_1}
        \end{bmatrix}
{        \begin{bmatrix}
            \cos\theta & -\sin\theta \\
            \sin\theta & \cos{\theta} 
        \end{bmatrix}}
        \begin{bmatrix}
            e^{-i \phi_2 } & 0 \\
            0 & e^{i \phi_2}
        \end{bmatrix}
       \end{split}
    \end{equation}
    is a general two-qubit unitary in the degenerate subspace, which has be decomposed into a combination of two rotations about the $Z$-axis (defined by the states $\{ \ket{01}, \ket{10} \}$), a rotation about the $Y$-axis, and a phase factor, $\alpha$. Hence, the two-qubit unitary $V_{AB}$ can be written as a product of unitaries,   
        \begin{equation}\label{Eq:product of unitaries}
        V_{AB}=S(\omega_0)T(\omega_1)R^{z}(2\phi_1)R^{y}(2\theta)R^{z}(2 \phi_2)
    \end{equation}
    where 
        \begin{equation}
            \begin{split}
                    S(\omega_0) = \begin{bmatrix}
                        e^{i \omega_0} & 0 & 0 & 0 \\
                        0 & 1 & 0 & 0 \\
                        0 & 0 & 1 & 0 \\
                        0 & 0 & 0 & 1
                    \end{bmatrix}, ~  T(\omega_1) &= \begin{bmatrix}
                        1 & 0 & 0 & 0 \\
                        0 & 1 & 0 & 0 \\
                        0 & 0 & 1 & 0 \\
                        0 & 0 & 0 & e^{i \omega_1}
                    \end{bmatrix}, 
                        R^{z}(2 \phi) = \begin{bmatrix}
                        1 & 0 & 0 & 0 \\
                        0 & e^{-i \phi} & 0 & 0 \\
                        0 & 0 & e^{i \phi} & 0 \\
                        0 & 0 & 0 & 1
                    \end{bmatrix}, \\
                        R^{y}(2 \theta) &= \begin{bmatrix}
                        1 & 0 & 0 & 0 \\
                        0 & {\cos\theta} & -\sin{\theta} & 0 \\
                        0 & \sin{\theta} & \cos{\theta} & 0 \\
                        0 & 0 & 0 & 1
                    \end{bmatrix}, 
            \end{split}
        \end{equation}
    with all decompositions with respect to the eigenbasis of $L_{AB}$. 
    
    To achieve coherence concentration, we need to evaluate the output state $\sigma_{A} = {\rm tr}_{B}\big[ V_{AB} (\rho_A \otimes \rho_B) V_{AB}^\dagger \big]$, which is characterised by $q_{00} \coloneqq \bra{0} \sigma_A \ket{0}$ and $q_{01} \coloneqq \bra{0} \sigma_A \ket{1}$. To this end,
    we now sequentially apply each unitary in Eq.~\eqref{Eq:product of unitaries} to $\rho_A \otimes \rho_B$, assess each's ability to increase local coherence.      
    Firstly, if applying $S(\omega_0)$, 
    \begin{equation}
            \vert q_{01} \vert = \vert p_{01} (1-p_{00} + e^{i\omega_0}p_{00}) \vert\leq \vert p_{01} \vert \big( \vert 1 - p_{00} \vert + \vert e^{i \omega_0} p_{00} \vert \big)= \vert p_{01} \vert,
    \end{equation}
    where the equality occurs when $e^{i \omega_0}=1$. Hence, under unitaries of this form, we have $\Delta M^{\modesuper{j}} \leq 0$ $\forall ~ \omega_0$, with equality if $\omega_0 = 2\pi n$ for integers $n$. The same can be shown for $T(\omega_1)$, meaning that when looking to concentrate coherence, it is always sufficient to set $\omega_0 = 0 = \omega_1$ and hence apply the identity in these subspaces. 
    
    Secondly, when applying $R^{z}(2 \phi)$ to $\rho_A \otimes \rho_B$ and taking the partial trace, it can be seen that $\vert q_{01} \vert = \vert p_{01} \vert ~\forall~ \phi$, meaning that $\Delta M^{\modesuper{j}} = 0$ for all unitaries of this form. Hence, when looking to concentrate coherence, it is sufficient to always set $\phi = 0$. It is noted that $[S(\omega_0), R^{z}(2\phi_1)R^{y}(2\theta)R^{z}(2 \phi_1)]=0$ and $[T(\omega_1), R^{z}(2\phi_1)R^{y}(2\theta)R^{z}(2 \phi_1)]=0$, meaning that changing the order in which these unitaries are applied cannot help one concentrate coherence. One therefore only has to consider the local coherence increasing power of $R^{y}(2 \theta)$. 
    
    Applying $R^{y}(2 \theta)$ to $\rho_A \otimes \rho_B$ and tracing out the $B$ system then gives 
        $
        q_{01} = p_{01} \big[ \cos\theta + (2p_{00}-1)\sin{\theta} \big].
        $
    From here, the maximum $M^{\modesuper{j}}(\sigma_A)$ that can achieved {by allowed unitaries} is given by
    \begin{equation}
        \begin{split}
            \max_{V_{AB}} M^{\modesuper{j}}(\sigma_A) = \max_{\theta} {\big\vert p_{01} \big[ \cos\theta + \sin\theta(2p_{00}-1) \big] \big\vert} {= \vert p_{01} \vert ~ \max_{\bm{u}} \vert \bm{u} \cdot \bm{v} \vert,}
        \end{split}
    \end{equation}
    where 
    \begin{equation}
        \bm{u} = {\begin{bmatrix}
            \cos\theta \\
            \sin\theta
        \end{bmatrix}}, ~ ~ \bm{v} = \begin{bmatrix}
            1 \\
            2p_{00}-1
        \end{bmatrix}.
    \end{equation}
    The dot product will be maximised when $\bm{u}$ points in the same direction as $\bm{v}$. Hence, the maximum {is achieved by}
    \begin{equation}
         {\bm{u}_{\rm opt}} = \frac{1}{\sqrt{1 + (2p_{00}-1)^2}}\begin{bmatrix}
            1 \\
            2p_{00}-1
        \end{bmatrix},
    \end{equation}
    from which the optimal angle of rotation to increase the local coherence can be found:
    \begin{align}\label{Eq:optimal angle}
    \theta_{\rm opt}\coloneqq\cos^{-1}\left(\frac{1}{\sqrt{1 + (2p_{00}-1)^2}}\right). 
    \end{align}
    Finally, using $\bm{u}_{\rm opt}$, the maximum {$\Delta M^{\modesuper{j}}_A$} can be found to be  
    \begin{equation}
        \begin{split}
            \Delta M_{A}^{\modesuper{j}}  \coloneqq \max_{V_{AB}} M^{\modesuper{j}}(\sigma_A) -  M^{\modesuper{j}}(\rho_A) {= \vert p_{01} \vert ~ \bigg( \sqrt{1 + (2p_{00}-1)^2} \bigg) - \vert p_{01} \vert,}
        \end{split}
    \end{equation}
    completing the proof. 
\end{proof}

\subsection{Proof of Eq.~\ref{eq:recurrence relations}}\label{section: Recurrence Relations Formation}

\begin{proof}
{Here, we will show} how the coupled non-linear recurrence relationships, {namely, Eq.~\eqref{eq:recurrence relations} in the main text,} are formed from the unitary found in Result~\ref{result_3:maxIncreaseQubits}. 
Assuming one initially has two copies of the state $\rho$:
\begin{equation}
    \begin{split}
        \rho_A \otimes \rho_B  &= \begin{bmatrix}
            p_{00} & p_{01} \\
            p_{01}^* & (1-p_{00})
        \end{bmatrix} \otimes \begin{bmatrix}
            p_{00} & p_{01} \\
            p_{01}^* & (1-p_{00})
        \end{bmatrix} \\
        &= \begin{bmatrix}
            p_{00}^2       & p_{00}p_{01}           &  p_{01}p_{00}             & p_{01}^2            \\
            p_{00}p_{01}^* & p_{00}(1-p_{00})         &  \vert p_{01} \vert^2    & p_{01}(1-p_{00})    \\   
            p_{01}^*p_{00} & \vert p_{01} \vert^2  &  (1-p_{00})p_{00}     & (1-p_{00}) p_{01}   \\   
            (p_{01}^*)^2 & p_{01}^*(1-p_{00})         &  (1-p_{00})p_{01}^*    & (1-p_{00})^2   \\   
        \end{bmatrix}.
    \end{split}
\end{equation}
To this bipartite state, the following optimal concentrating unitary, found in Result~\ref{result_3:maxIncreaseQubits}, is applied, 
\begin{equation}
    {U_{\rm opt}}(\rho) = \begin{bmatrix}
        1 & 0 & 0 & 0 \\
        0 & \cos{\theta_{\rm opt}} & -\sin{\theta_{\rm opt}} & 0 \\
        0 & \sin{\theta_{\rm opt}} & \cos{\theta_{\rm opt}} & 0 \\
        0 & 0 & 0 & 1
    \end{bmatrix}, \label{rotationDegSpace_2}
\end{equation}
{where, from Eq.~\eqref{Eq:optimal angle}, we have}
\begin{equation}
    \cos{\theta_{\rm opt}} = \frac{1}{\sqrt{1+(2p_{00}-1)^2}}, ~ ~ ~ ~ \sin{\theta_{\rm opt}} = \frac{2p_{00}-1}{\sqrt{1+(2p_{00}-1)^2}},
\end{equation}
{which are dependent of $\rho$.} The $B$ system is then traced over, giving an output state 
\begin{equation}
    \sigma_A = \textrm{tr}_B\big[ {U_{\rm opt}}(\rho) (\rho_A \otimes \rho_B ) {U_{\rm opt}(\rho)^{\dagger}} \big] {\eqqcolon \begin{bmatrix}
            q_{00} & q_{01} \\
            q_{01}^* & (1-q_{00})
        \end{bmatrix}.}
\end{equation}
By explicitly performing {this calculation,} it can be seen that 
\begin{equation}
    \begin{split}
        q_{00} &= p_{00}^2 + p_{00}(1-p_{00}) \cos^2{\theta_{\rm opt}} - \vert p_{01} \vert^2 \sin{\theta_{\rm opt}} \cos{\theta_{\rm opt}} - \vert p_{01} \vert^2 \sin{\theta_{\rm opt}} \cos{\theta_{\rm opt}} \\
        &= p_{00} - 2 \vert p_{01} \vert^2 \sin{\theta_{\rm opt}} \cos{\theta_{\rm opt}},
    \end{split}
\end{equation}
whilst 
\begin{equation}
    \begin{split}
        q_{01} &= p_{00}p_{01}\sin{\theta_{\rm opt}} + p_{01}p_{00}\cos{\theta_{\rm opt}} + p_{01}(1-p_{00})\cos{\theta_{\rm opt}} - (1-p_{00})p_{01} \sin{\theta_{\rm opt}} \\
        &= p_{01} \bigg[ \cos{\theta_{\rm opt}} + \sin{\theta_{\rm opt}}(2p_{00}-1) \bigg].
    \end{split}
\end{equation}
Inputting $\cos{\theta_{\rm opt}}$ and $\sin{\theta_{\rm opt}}$ into these equation then gives 
\begin{equation}
    \begin{split}
        q_{00} = p_{00} - \frac{2(2p_{00}-1) \vert p_{01} \vert^2}{1+(2p_{00}-1)^2}; \\
        q_{01} = p_{01} \sqrt{1+(2p_{00}-1)^2}.
    \end{split}
\end{equation}
The output state $\sigma$, as characterised by $q_{00}$ and $q_{01}$, therefore depends only on 
{$p_{00}$ and $p_{01}$ from $\rho$.} We then note that local {$Z$-rotations} are always {allowed}, such that we can set $p_{01} \rightarrow \vert p_{01} \vert$ without loss of generality. 
One can now consider taking two copies of $\sigma$ and repeating the same process. This would output the state 
\begin{equation}
    \omega_A = \textrm{tr}_B\big[ {U_{\rm opt}(\sigma)} (\sigma_A \otimes \sigma_B ) {U_{\rm opt}(\sigma)^{\dagger}} \big] {\eqqcolon} \begin{bmatrix}
            l_{00} & l_{01} \\
            l_{01}^* & (1-l_{00})
        \end{bmatrix},
\end{equation}
where, using the same calculation as above, it can be seen that 
\begin{equation}
    \begin{split}
        l_{00} = q_{00} - \frac{2(2q_{00}-1) \vert q_{01} \vert^2}{1+(2q_{00}-1)^2}; \\
        l_{01} = \vert q_{01} \vert \sqrt{1+(2q_{00}-1)^2}.
    \end{split}
\end{equation}
Again, the output state $\omega$, as characterised by $l_{00}$ and $l_{01}$, can be seen to depend on 
{$q_{00}$ and $q_{01}$ from $\sigma$.} This process can then be repeated an arbitrary {finite} number of times, where each time the parameters characterising the output state can be formulated in terms of the parameters characterising the input state. {This, therefore,} takes the form of recurrence relations, which here are coupled and non-linear functions where the next value of the function depends on some combination of the previous value. 
We note that here the outputs also only depend on the previous values, and not any further back in the sequence. By labelling the input state from the $m$th layer as {$p^{\recsuper{m}}_{00}$ and $p^{\recsuper{m}}_{01}$,} the outputs from the $m+1$ layer can therefore be given by 
\begin{equation}
    \begin{split} \label{SuppMat_eq:recurrence relations}
        p_{00}^{\recsuper{m+1}} &= p_{00}^{\recsuper{m}} - \frac{2(2p^{\recsuper{m}}_{00}-1) \vert p^{\recsuper{m}}_{01} \vert ^2}{1+(2p^{\recsuper{m}}_{00}-1)^2}; \\
        p_{01}^{\recsuper{m+1}} &=  \vert p^{\recsuper{m}}_{01} \vert ~ \sqrt{1 + (2p^{\recsuper{m}}_{00} - 1)^2},
    \end{split}
\end{equation}
{which is exactly Eq.~\eqref{eq:recurrence relations} in the main text.
The proof is thus completed.}
\end{proof}

\subsection{Proof of Result~\ref{lemma:infinte_concat} \label{appendix:lemma_2_proof}}

This appendix contains a proof of Result~\ref{lemma:infinte_concat}, which is first stated here for clarity. 
\\
\\
\textbf{Result~\ref{lemma:infinte_concat}}~\textit{Given an initial qubit state $\rho$, {let $\sigma_{(m)}$ be the qubit state output after the $m$th step of the concatenation protocol. Then we have}
}
    \begin{equation}
        {M^{\modesuper{1}}(\sigma_{m}) \leq \sqrt{2 {\rm tr} \big(\rho^2\big) - 1}\quad\forall\,m.}
    \end{equation}
\begin{proof}
    Firstly, if $\rho$ is already incoherent, we must have $M^{(1)}(\sigma_{(m)})=0$ $\forall\,m$ and the bound holds immediately. Hence, it suffices to focus on $\rho$ with some initial coherence.
    
    Now, let $\bm{n}^0 = (n^0_x, n^0_y, n^0_z) \in \mathbb{R}^3$ be $\rho$'s Bloch vector, which has 2-norm \hbox{$\| \bm{n}^0 \|_2 = \sqrt{(n_x^0)^2 + (n_y^0)^2 +(n_z^0)^2}$}. Let $p^0_{00}\coloneqq\bra{0}\rho\ket{0}=(1+n^0_z)/2$ and $p^0_{01}\coloneqq\bra{0}\rho\ket{1}=(n^0_x + in^0_y)/2$. As stated in the main text, since $Z$-rotations are all locally allowed, we can set $n_y=0$ without loss of generality.
    Also, as we assume $\rho$ is coherent, we must have $n_x^0\neq0$.
    This means $\rho$'s purity, ${\rm tr}(\rho^2)$, can be expressed as
    \begin{align}\label{Eq:purity formula}
    {\rm tr}\big( \rho^2 \big) = (p_{00}^0)^2 + \left(1-p_{00}^0\right)^2 + 2|p_{01}^0|^2 = \frac{1+\| \bm{n}^{0} \|_2^2}{2}.
    \end{align}
    Furthermore, the recurrence relations [Eq.~\eqref{eq:recurrence relations} in the main text] can be rewritten in terms of the Bloch vector as 
    {\begin{equation}\label{Eq: formula001}
        \begin{split}
            n_z^{m+1} &= n_z^m - \frac{n_z^m(n_x^m)^2}{1+(n_z^m)^2}; \\
            |n_x^{m+1}| &= |n_x^m| \sqrt{1+(n_z^m)^2},
        \end{split}
    \end{equation}
    where $n_z^m$ and $n_x^m$} are the $Z$ and $X$ components of {$\sigma_{(m)}$'s Bloch vector, denoted by $\bm{n}^m$. We note that $n_y^m=0$ $\forall\,m$ such that  
    $M^{(1)}(\sigma_{(m)}) = |n_{x}^{\recsuper{m}}|.$
    Now, a direct calculation shows that 
    \begin{equation}\label{eq:computation001}
     \| \bm{n}^{m+1} \|_2^2 = \| \bm{n}^m \|_2^2 +\frac{(n_z^mn_x^m)^2}{1+(n_z^m)^2} \biggl[ (n_z^m)^2+\frac{(n_x^m)^2}{1+(n_z^m)^2}-1 \biggl].
    \end{equation}
    It is then noted that 
    \begin{equation}\label{eq:computation002}
        (n_z^m)^2+\frac{(n_x^m)^2}{1+(n_z^m)^2} = \frac{(n_z^m)^2+(n_x^m)^2+(n_z^m)^4}{1+(n_z^m)^2}\le\frac{1+(n_z^m)^4}{1+(n_z^m)^2}=1-(n_z^m)^2 \leq 1\quad\forall\,m,
    \end{equation}
    as $(n_z^m)^2 + (n_x^m)^2 \leq 1$ due to normalisation and $1+(n_z^m)^4 = [1+(n_z^m)^2][1-(n_z^m)^2]$. As
    $(n_z^mn_x^m)^2/[1+(n_z^m)^2] \geq 0$ $\forall\,m$, Eqs.~\eqref{eq:computation001} and~\eqref{eq:computation002} jointly imply that
    \begin{align}
    \| \bm{n}^{m+1} \|_2 \leq \| \bm{n}^{m} \|_2\quad\forall\,m. 
    \end{align}
    Using Eq.~\eqref{Eq:purity formula}, the definition of the $2$-norm, and recalling that $\sigma_{(m)}$'s coherence reads
    $M^{(1)}(\sigma_{(m)}) = |n_{x}^{\recsuper{m}}|$, we obtain 
    \begin{align}
    M^{(1)}(\sigma_{(m)})=|n_x^m|\le\| \bm{n}^{m} \|_2 \leq \| \bm{n}^{0} \|_2 = \sqrt{2{\rm tr}(\rho^2)-1}\quad\forall\,m,
    \end{align}
    which is the desired result.}
\end{proof}

{
As a remark, we have the following observation:
\begin{center}
{\em If the initial qubit state $\rho$ is coherent, then, no matter how small the initial coherence is, we must have $n_z^\infty=0$.}
\end{center}
Namely, a vanishing amount of coherence possessed by $\rho$ suffices to asymptotically erase the $Z$ component of the Bloch vector.
\begin{proof}
To see this, first, from the above proof, we observe that
    $
    n_z^{m+1} \le n_z^m$ and $|n_x^{m+1}| \ge |n_x^m|$ $\forall\,m\in\mathbb{N},
    $ 
    meaning that $\{|n_x^m|\}_m$ is monotonic non-decreasing and $\{n_z^m\}_m$ is monotonic non-increasing when $m\to\infty$.
    Hence, by the monotone convergence theorem (see, e.g., Ref.~\cite{Apostol}), we must have
    \begin{align}
    \lim_{m\to\infty}n_z^m = \inf_m n_z^m\eqqcolon n_z^\infty\quad\&\quad \lim_{m\to\infty}|n_x^m| = \sup_m |n_x^m|\eqqcolon n_x^\infty.
    \end{align}
    Now we argue that $n_z^\infty=0$. To see this, we apply $m\to\infty$ on both sides of $|n_x^{m+1}| = |n_x^m| \sqrt{1+(n_z^m)^2}$ [i.e., the second line in Eq.~\eqref{Eq: formula001}], obtaining $(n_x^{\infty})^2 = (n_x^\infty)^2\left[1+(n_z^\infty)^2\right]$ (which is true because both limits $\lim_{m\to\infty}n_z^m$ and $\lim_{m\to\infty}|n_x^m|$ exist).
    This then implies that $|n_x^\infty|n_z^\infty=0$.
    Hence, we must have either $n_z^\infty=0$ or $|n_x^\infty|=0$. However, if $n_x^\infty=0$, then, since $\{|n_x^m|\}_m$ is monotonic non-decreasing, we must have $n_x^0=0$, meaning that the initial state $\rho$ is already incoherent, in contradiction to our assumption.
    Consequently, we must have 
    $n_z^\infty=0.$
\end{proof}
}

\subsection{Proof of Result~\ref{result:exp amplification}}\label{appendix:result:exp amplification_proof}

This appendix contains a proof of Result~\ref{result:exp amplification}, which is first stated here for clarity. 
\\
\\
{\textbf{Result~\ref{result:exp amplification}}~\textit{For any number of states $n \in\mathbb{N}$ and $\epsilon>0$, there exists an initial state $\rho$ (dependent on $n,\epsilon$) such that}} {
\begin{align}
\hspace{-0.1cm}
M^{\modesuper{1}}(\rho)<\frac{1}{n}\;\&\;
\frac{M^{\modesuper{1}}(\sigma_{(m)})}{M^{\modesuper{1}}(\rho)}>2^{-\epsilon} \sqrt{n}\;\forall\,m\ge \log_{2}(n).
\end{align} }
\begin{proof}
For a given $N\in\mathbb{N}$, we firstly choose an initial qubit state $\rho$ with Bloch vector $(n_x^0,n_y^0=0,n_z^0)$ satisfying
\begin{align}\label{Eq:conditions}
0<|n_x^0|^2\ll\frac{1}{2^{2N}}\times\frac{1}{2N}\times \min\{ 1-|n_z^0|^2, |n_z^0|^2 \},
\end{align}
where the notation ``$x\ll y$'' means that $x/y\sim O(1)$ is significantly smaller than $1$. As a result, this state has a vanishing amount of coherence, as 
\begin{align}\label{Eq:vanishing initial coherence}
M^{(1)}(\rho) = |n_x^0|\ll\frac{1}{2^N}.
\end{align}
For convenience, let us define, for $m\in\mathbb{N}\cup\{0\}$,
\begin{align}
A_m\coloneqq|n_x^m|^2\quad\&\quad B_m\coloneqq|n_z^m|^2.
\end{align}
Then Eq.~\eqref{Eq:conditions} becomes
\begin{align}\label{Eq:conditions var2}
0<A_0\ll\frac{1}{2^{2N}}\times\frac{1}{2N}\times \min\{1-B_0, B_0 \}.
\end{align}
Also, using Eq.~\eqref{Eq: formula001}, we have that, for every $m\in\mathbb{N}\cup\{0\}$,
\begin{equation}\label{Eq: A_mB_m formula001}
        \begin{split}
            B_{m+1} &= B_m\left(1-\frac{A_m}{1+B_m}\right)^2 = B_0\prod_{k=0}^{m}\left(1-\frac{A_k}{1+B_k}\right)^2; \\
            A_{m+1} &= A_m \left(1+B_m\right) = A_0\prod_{k=0}^m\left(1+B_k\right).
        \end{split}
    \end{equation}
Using the fact that $0\le B_k\le1$ $\forall\,k$, we can combine Eqs.~\eqref{Eq:conditions var2} and~\eqref{Eq: A_mB_m formula001} to obtain
\begin{align}\label{Eq: condition property01}
A_m {\leq A_0 \prod_{k=0}^m (2)} = 2^mA_0\le2^NA_0\ll\frac{1}{2^N}\times\frac{1}{2N}\times \min\{1-B_0, B_0 \} \quad\forall\,m\le N.
\end{align}
Where in the first inequality we have set $B_k=1$, and in the second we have used the fact that $m \leq N$. Finally, we have subbed in Eq.~\eqref{Eq:conditions var2}. 

Furthermore, using the fact that $A_0\le A_k\le1$ and $B_0\ge B_k\ge0$ $\forall\,k$, Eq.~\eqref{Eq: A_mB_m formula001} also implies that, for every $m\le N$ (note, the justification details why we get to the next line),
\begin{align}\label{Eq: condition property02}
    B_0 &\geq B_{m+1} \hspace{5.2cm} \textrm{as}~ 
    \text{Eq.~\eqref{Eq: A_mB_m formula001}}
    \nonumber \\
    &= B_0\prod_{k=0}^{m}\left(1-\frac{A_k}{1+B_k}\right)^2 \hspace{2.6cm} \textrm{as}~B_k \geq 0 ~ \forall~k  \nonumber \\
    & \geq B_0\prod_{k=0}^{m}\left(1-A_k\right)^2 \hspace{3.5cm}   \textrm{as}~ A_m \geq A_{m-1} \nonumber \\
    & \geq B_0\prod_{k=0}^{m}\left(1-A_m\right)^2 = B_0 \left(1-A_m\right)^{2m} \hspace{0.7cm} \textrm{as} ~ m \leq N \\
    & \geq B_0 \left(1-A_m\right)^{2N} \hspace{3.7cm} \textrm{as} ~(1+x)^n = \sum_{k=0}^n {n \choose k} x^k ~ ~ \textrm{ignoring negative on $A_m$}  \nonumber \\
    &\geq B_0\left(1-\sum_{k=1}^{2N}(2NA_m)^k\right) \hspace{1cm} \hspace{1.4cm} \textrm{as sum of a geometric series} \nonumber \\
    &=B_0\left(1-2NA_m\times\frac{1-(2NA_m)^{2N}}{1-2NA_m}\right) \hspace{0.6cm} \textrm{as}~(2NA_m)^{2N}\ge0
    \nonumber \\
    & \geq B_0\left(1-\frac{2NA_m}{1-2NA_m}\right) \hspace{2.7cm} \textrm{as}~0 \leq 2NA_m \ll 2^{-N} < 1/2~\text{due to Eq.~\eqref{Eq: condition property01}}  \nonumber\\
    & \geq  B_0-4NA_m, \nonumber
\end{align}
giving in total $B_{m+1} \geq B_0 -4 N A_m$.
This means that, as long as $m\le N$, we have $B_{m+1} \approx B_0$, due to Eq.~\eqref{Eq: condition property01}.

Now, by setting $m=N-1$, Eqs.~\eqref{Eq: A_mB_m formula001} and~\eqref{Eq: condition property02} jointly suggest that (below, we define $A_{-1}\coloneqq0$, so that $B_0 \geq B_0-4NA_{-1}$)
\begin{align}\label{Eq: computation001 part1}
\log_2\frac{A_{N}}{A_0} &=\sum_{k=0}^{N-1}\log_2(1+B_k) \hspace{2.9cm} \textrm{as}~B_{k+1} \geq B_0-4NA_k\;\forall\,k\le N~\text{from Eq.~\eqref{Eq: condition property02}}  \nonumber\\
& \ge \sum_{k=0}^{N-1}\log_2\left(1+B_0-{4}NA_{k-1}\right) \hspace{1cm}~\textrm{as} ~ A_{k} \le A_{N}~\forall\,k\le N \;\&\;A_{-1}\coloneqq0 \nonumber \\
&\ge \sum_{k=0}^{N-1} \log_2\left(1+B_0-{4}NA_{N}\right) \\
&=  N \log_2\left(1+B_0-{4}NA_{N}\right)   \nonumber \\
&= N\log_2\left[2 - \left(1-B_0+{4}NA_{N}\right)\right] \hspace{0.7cm}~\textrm{as $4NA_N\ll(1-B_0)/2^{N-1}\le1-B_0$ due to Eq.~\eqref{Eq: condition property01}}~\nonumber \\
&\ge N\log_2\left[2 - 2(1-B_0)\right] \nonumber \\
&= N +N\log_2B_0, \nonumber
\end{align}
Since $\{A_k\}_k$ is non-decreasing in $k$, and since $M^{(1)}(\sigma_{(m)}) = |n_x^m| = \sqrt{A_m}$ as well as $B_0=|n_z^0|^2$, we thus conclude that $\forall\,m\ge N$ the following holds
\begin{align}
\frac{M^{(1)(\sigma_{(m)})}}{M^{(1)}(\rho)}&\ge \frac{M^{(1)(\sigma_{(N)})}}{M^{(1)}(\rho)} 
=\sqrt{\frac{A_N}{A_0}} 
\geq\sqrt{2^{N+N\log_2(|n_z^0|^2)}} 
= 2^{\frac{N}{2}}\times2^{N\log_2 (|n_z^0|)},
\end{align}
where we have used Eq.~\eqref{Eq: computation001 part1}.
Note that $\log_2(|n_z^0|)\le0$ is in general negative.
Finally, to see the exponential amplification, for any given $\epsilon>0$, which can be as small as we want, there exists an $|n_z^0|$ satisfying $0<\left|\log_2|n_z^0|\right|\ll\epsilon/N$, from which we have $N\log_2|n_z^0|\ge-\epsilon$. 
Put this together, we have,
\begin{align}
\frac{M^{(1)(\sigma_{(m)})}}{M^{(1)}(\rho)}\ge2^{\frac{N}{2}-\epsilon}\quad\forall\,m\ge N.
\end{align}
{To complete the proof, we note that $n=2^N$, where $n$ is the number of states used in $N$ layers of the protocol.} Hence, there exists an initial state that has a vanishingly small amount of coherence [Eq.~\eqref{Eq:vanishing initial coherence}], from which a state with unboundedly more coherence can be created using this concatenation protocol.  
\end{proof}
Numerically, it can be seen in Fig~\ref{fig:bloch_sphere_plots} that even for states with $0 \leq |n_z^0| < 1/2$, given enough steps, an exponential increase in local coherence can be achieved. We leave a mathematical proof of this fact for future work.

\subsection{Proof of Result~\ref{result:bound_1} \label{appendix:result_1_proof}}

{To prove Result~\ref{result:bound_1}, the following Lemma is first proved.} It states that, given some operator {$P$,} the sum over the absolute value of the matrix elements of any diagonal of length $k$, in any basis, is upper-bounded by the $k$th Ky-Fan norm of {$P$.}

\begin{lemma} \label{lemma_sum_coefficents}
    Let 
    {$P$ be a linear operator acting on finite-dimensional vector spaces, where}
    $\{ \ket{\phi_i} \}_{i=0}^m$ and $\{ \ket{\psi_j} \}_{j=0}^{n}$ {are} orthonormal bases on the input and output spaces of {$P$, respectively. Then we have}
    \begin{equation}
        \sum_n \vert \bra{\psi_{n+j}} {P} \ket{\phi_{n}} \vert \leq {\| P \|_{{\text{$k$-{\rm KF}}}}\quad\forall\,j. }
    \end{equation}
\end{lemma}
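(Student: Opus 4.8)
\textbf{Proof plan for Lemma~\ref{lemma_sum_coefficents}.} The plan is to reduce the claim to a statement about $k\times k$ submatrices. First I would observe that the left-hand side involves matrix elements $\bra{\psi_{n+j}}P\ket{\phi_n}$ taken along a single ``$j$-shifted diagonal'' of the matrix $[P]$ representing $P$ in the chosen bases; there are at most $k$ such nonzero terms (where $k$ is the length of that diagonal, bounded by $\min\{m+1,n+1\}$). The key idea is that these $k$ entries all lie in a $k\times k$ submatrix $Q$ of $[P]$ obtained by selecting the rows indexed by $\{n+j\}$ and the columns indexed by $\{n\}$ that appear — in fact they lie exactly on the main diagonal of $Q$ after an appropriate (harmless) reindexing/permutation of rows and columns. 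So it suffices to show that for any $k\times k$ matrix $Q$, the sum of absolute values of its diagonal entries is at most $\sum_{i=1}^k s_i(Q)$, the sum of all $k$ singular values, i.e.\ $\|Q\|_{k\text{-}\mathrm{KF}}=\|Q\|_1$ (trace norm) for a $k\times k$ matrix.

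Next I would prove that submatrix step: the singular values of a submatrix $Q$ of $P$ are dominated by those of $P$ in the sense that $s_i(Q)\le s_i(P)$ for all $i$ (this is the standard interlacing/pinching fact — deleting rows and columns can only decrease singular values, provable by writing $Q = \Pi_{\mathrm{out}} P \Pi_{\mathrm{in}}$ for coordinate projections and using that projections are contractions together with the min-max characterisation of singular values). Summing over $i=1,\dots,k$ gives $\|Q\|_1=\sum_{i=1}^k s_i(Q)\le\sum_{i=1}^k s_i(P)=\|P\|_{k\text{-}\mathrm{KF}}$. Then the remaining ingredient is the elementary bound $\sum_{i=1}^k |Q_{ii}| \le \|Q\|_1$ for a $k\times k$ matrix: writing $Q=W|Q|$ in polar form, $\sum_i|Q_{ii}|$ can be written as $|\sum_i e^{i\theta_i}Q_{ii}| \le |\mathrm{tr}(DQ)|$ for a suitable diagonal unitary $D$, and $|\mathrm{tr}(DQ)|\le \|D\|_\infty\|Q\|_1=\|Q\|_1$ by Hölder's inequality for Schatten norms. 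Chaining these inequalities yields the claim.

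The main obstacle I expect is purely bookkeeping rather than conceptual: making precise that the $j$-shifted diagonal entries of the rectangular matrix $[P]$ really do sit on the genuine main diagonal of a square $k\times k$ submatrix, handling the edge cases at the ends of the diagonal (where the index range for $n$ is clipped by $0$ and by $\dim$), and being careful that when the input and output dimensions differ one still extracts a \emph{square} submatrix of the right size $k$. Once the correct submatrix is identified, the two inequalities $s_i(Q)\le s_i(P)$ and $\sum_i|Q_{ii}|\le\|Q\|_1$ are standard, so the proof should be short. One clean way to avoid explicit index juggling is to define coordinate isometries $V_{\mathrm{in}}:\mathbb{C}^k\to\mathcal{H}_{\mathrm{in}}$ and $V_{\mathrm{out}}:\mathbb{C}^k\to\mathcal{H}_{\mathrm{out}}$ picking out the relevant basis vectors in the order dictated by the diagonal, set $Q\coloneqq V_{\mathrm{out}}^\dagger P V_{\mathrm{in}}$, note $Q_{ii}=\bra{\psi_{n_i+j}}P\ket{\phi_{n_i}}$ by construction, and then invoke $\|Q\|_1=\|V_{\mathrm{out}}^\dagger P V_{\mathrm{in}}\|_1\le\|P\|_{k\text{-}\mathrm{KF}}$ using that $V_{\mathrm{in}},V_{\mathrm{out}}$ are isometries (equivalently compositions of a projection and a unitary), together with $\sum_i|Q_{ii}|\le\|Q\|_1$.
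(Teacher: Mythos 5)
Your proposal is correct, and it reaches the result by a genuinely different route from the paper. The paper works through the duality of the Ky-Fan norm: it builds the phase-matched partial isometry $X=\sum_n e^{i\theta_{n+j,n}}\ketbratwo{\psi_{n+j}}{\phi_n}$, identifies the left-hand side with $\vert{\rm tr}(X^\dagger P)\vert$, and then invokes the variational characterisation $\| P \|_{k\text{-}{\rm KF}}=\sup_Z\{\vert{\rm tr}(Z^\dagger P)\vert:\|Z\|^*_{\scriptscriptstyle(k)}\le1\}$, verifying $\|X\|^*_{\scriptscriptstyle(k)}\le1$ via von Neumann's trace inequality. You instead compress everything into the $k\times k$ submatrix $Q=V_{\rm out}^\dagger P V_{\rm in}$, bound its diagonal sum by $\|Q\|_1$ with a diagonal unitary and H\"older, and finish with the singular-value domination $s_i(V_{\rm out}^\dagger P V_{\rm in})\le s_i(P)$. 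The two arguments are secretly close — your ${\rm tr}(D^\dagger Q)$ is exactly the paper's ${\rm tr}(X^\dagger P)$, and both ultimately rest on von Neumann's inequality — but yours sidesteps the dual-norm machinery entirely (no need to define $\|\cdot\|^*_{\scriptscriptstyle(k)}$ or use the bidual identity), at the cost of importing the standard fact that compression by isometries can only decrease singular values. Your version is arguably the more elementary and self-contained of the two; the paper's version makes the role of the Ky-Fan norm's variational definition more explicit, which is the structural reason that norm appears in Results 4 and 5. The bookkeeping concerns you flag (clipped index ranges at the ends of the shifted diagonal, extracting a square submatrix when the ambient dimensions differ) are real but harmless, and your isometry formulation handles them cleanly.
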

\begin{proof}
    {Firstly, we write $\bra{\psi_{r}} P \ket{\phi_{s}} = |\bra{\psi_{r}} P \ket{\phi_{s}}|e^{i \theta_{r,s}}$ $\forall\,r,s$. Now, consider the operator}
    $
        {X \coloneqq} \sum_{n} e^{i \theta_{n+j,n}} \ketbratwo{\psi_{n+j}}{\phi_n}.
    $
    It can then we seen that
    \begin{equation}
        {{\rm tr}\big( X^\dagger P \big)} = \sum_n \vert \bra{\psi_{n+j}} {P} \ket{\phi_{n}} \vert.
    \end{equation}
    It is noted that {\hbox{${\rm tr} \big( X^\dagger A \big) = \vert {\rm tr} \big( X^\dagger A \big) \vert$},} as it is non-negative.
    Now, the $k$th Ky-fan norm \cite{Fan1951} can be written as 
    \begin{equation}
        {\| P \|}_{{\text{$k$-{\rm KF}}}} = {\sup_{Z} \big\{ \vert {\rm tr} \big( Z^\dagger P \big) \vert : \| Z \|^*_{\scriptscriptstyle(k)} \leq 1 \big\},}
    \end{equation}
    where {$\| Z \|^*_{\scriptscriptstyle(k)}$} is the dual of the $k$th {Ky-Fan norm} \cite{Horn_Johnson_1985} and $Z$ is an operator on the Hilbert space. By definition, the dual of the $k$th {Ky-Fan norm} is
    \begin{equation}
        {\| Q \|}_{\scriptscriptstyle(k)}^* {\coloneqq} \sup_{Y} \big\{ \vert {\rm tr} {\big( Y^\dagger Q \big)} \vert : {\| Y \|}_{{\text{$k$-{\rm KF}}}} \leq 1 \big\}, \label{eq: dual ky fan norm}
    \end{equation}
    where $Y$ is an operator on the Hilbert space. To write the $k$th Ky-Fan norm in this way, we have used the fact that taking the dual of {a norm's dual} returns the norm \cite{Horn_Johnson_1985}. 

    Using the above definitions, it can be stated that 
     \begin{equation}
          \sum_n \vert \bra{\psi_{n+j}} {P} \ket{\phi_{n}} \vert = \vert  {\rm tr} {\big( X^\dagger P \big)} \vert = \leq {\| P \|_{\text{$k$-{\rm KF}}}}\quad{{\rm if}\quad \| X \|^*_{\scriptscriptstyle(k)} \leq 1.} \label{eq:bound_shown}
     \end{equation}
     To see that $\| X \|^*_{\scriptscriptstyle(k)} \leq 1$, we explicitly evaluate it. Firstly, one can first find the eigenvalues of $\sqrt{X^\dagger X}$, as these are the singular values of $X$. It can easily be seen that $X$ has $k$ singular values that are one, with the remaining being zero. 
     
     Now, let \hbox{${\| X \|}_{\scriptscriptstyle(k)}^* = \vert {\rm tr} \big[ (\CY{Y^*})^\dagger X \big] \vert$}, such that $Y^*$ is the operator satisfying Eq.~\eqref{eq: dual ky fan norm}. The {von Neumann’s} trace inequality \cite{von1937some} {can then} be employed to state that 
     \begin{equation}
         \begin{split}
             \| X \|_{\scriptscriptstyle(k)}^* = ~ \vert {\rm tr} \big[ (Y^*)^\dagger X \big] \vert &\leq \sum_{i=0}^{\min\{n,m\}} \mu_i(X) \mu_i(Y^*) \\
             &=  \sum_{i=0}^{k-1} \mu_i(Y^*) \eqqcolon \| Y^* \|_{k-{\rm KF}}\leq 1,
         \end{split}
     \end{equation}
     where, {for an operator $C$, the set} $\{\mu_i(C)\}_{i=0}^{\min\{n,m\}}$ is the singular values of $C$ ordered in decreasing order, such that $\mu_i(C) \geq \mu_i(C)~\forall~i$. To go from the first to second line, we have used the fact that {all non-zero singular values} of $X$ are one. We have then used the definintion of the $k$th Ky-fan norm: it is the sum over the $k$ largest singular values. The final {inequality} comes from the definition of the dual norm of the $k$th Ky-Fan norm, where $Y$ was restricted to have $ \| Y \|_{k-{\rm KF}}\leq 1 $ by definition. 
     Hence, $ \vert \vert X \vert \vert^*_{\scriptscriptstyle(k)} \leq 1$, {implying Eq.~\eqref{eq:bound_shown} and} completing the proof. 
\end{proof}

We now restate Result~\ref{result:bound_1} before {we begin} the proof. 
\\
\\
\textbf{Result~\ref{result:bound_1}}~\textit{{Let $d^{(j)}\coloneqq{\rm dim}(\mathcal{V}^{\modesuper{j}}_{\rm in})$.
For any finite-dimensional state $\rho$, we have
    \begin{equation}
           \Delta M_{A}^{\scriptscriptstyle(j)} \leq \| (\rho_A \otimes \rho_B)^{\scriptscriptstyle(j)} \|_{\text{{${d}^{(j)}$}-{\rm KF}}} - \onenorm{\rho^{\scriptscriptstyle(j)}},
    \end{equation}
where $\| P \|_{\text{$\alpha$-{\rm KF}}}$ is the $\alpha$-Ky-Fan norm of the operator $P$, which is the sum of the $\alpha$ largest singular values of $P$.}
}
\begin{proof} 
    The $j$th bipartite mode {subspace,} $\mathcal{B}^{\scriptscriptstyle(j)}_{AB}$, has {$d^{(j)}\coloneqq{\rm dim}(\mathcal{V}^{\modesuper{j}}_{\rm in})$} basis operators {whose} coefficients will contribute to the local mode upon performing the partial trace. The rough goal of coherence concentration is to maximise the value of these coefficients under conjugation by {an allowed unitary.} It is noted {that} each basis operator in {$\mathcal{V}_{\rm in}^{(j)}$} is in a distinct row and column along a diagonal of the $j$th bipartite mode operator.
    
    Now, $M^{\scriptscriptstyle(j)}(\sigma_A){\coloneqq \onenorm{\sigma_A^{\scriptscriptstyle(j)}}}$ is the sum of the absolute values of the matrix elements of $\sigma_A$ in the $j$th mode. Each matrix element of $\sigma_A^{\scriptscriptstyle(j)}$ is then a sum over some subset of the coefficients of the basis states in {$\mathcal{V}_{\rm in}^{(j)}$} for the $j$th bipartite mode $\sigma_{AB}^{\scriptscriptstyle(j)}$. Hence, using the triangle inequality, it can be seen that 
    \begin{equation}
        \begin{split}
            M^{\scriptscriptstyle(j)}(\sigma_A) &\leq \sum_{n,m} {\vert \bra{n+j, m} \sigma_{AB}^{\scriptscriptstyle(j)} \ket{n,m} \vert,} \\
        \end{split}
    \end{equation}
    where 
    {$\bra{n+j, m} \sigma_{AB}^{\scriptscriptstyle(j)} \ket{n,m}$} are the {$d^{(j)}$} coefficients of {$\sigma_{AB}^{\scriptscriptstyle(j)}$ with respect to} the basis {operators} in {$\mathcal{V}_{\rm in}^{(j)}$.} This {upper bound} is then a sum over the absolute values of the matrix elements of a diagonal of the $j$th bipartite mode operator, $\sigma_{AB}^{\scriptscriptstyle(j)}$, in the eigenbasis of $L_{AB}$. {Lemma~\ref{lemma_sum_coefficents}} can then be employed to {imply that} 
    \begin{equation}
        \begin{split}
            \sum_{n,m} {\vert \bra{n+j, m} \sigma_{AB}^{\scriptscriptstyle(j)} \ket{n,m} \vert \leq \|   \sigma_{AB}^{\scriptscriptstyle(j)} \|_{\text{$d^{(j)}$-{\rm KF}}} = \| (\rho_A \otimes \rho_B)^{\scriptscriptstyle(j)} \|_{\text{$d^{(j)}$-{\rm KF}}},}
        \end{split}
    \end{equation}
    as the {Ky-Fan} norm is unitarily invariant. Taking away $M^{\scriptscriptstyle(j)}(\rho_A)$ from both sides completes the proof. 
\end{proof}

\subsection{Proof of Result~\ref{result:bound_2} \label{appendix:result_2_proof}}

{Here, we present} a proof of Result~\ref{result:bound_2}, which is first stated {for clarity.} 
\\
\\
\textbf{{Result~\ref{result:bound_2}}}~\textit{
{Let $d^{(j,c)} \coloneqq {\rm dim}(\mathcal{V}_{\rm in}^{\modesuper{j,c}})$.
For any $d$-dimensional state state $\rho$, we have}
\begin{equation}
     \Delta M_{A}^{\scriptscriptstyle(j)} \leq \sum_{c=0}^{2d-2-j} \| \Pi_{c+j} (\rho \otimes \rho)^{\scriptscriptstyle(j)} \Pi_{c} \|_{\text{{$d^{(j,c)}$}-{\rm KF}}} - \onenorm{\rho^{\modesuper{j}}},
\end{equation}
{where $\Pi_{c}$ is the projection onto the eigenspace of $L_{AB}$ with eigenvalue $\lambda_c$.}}
\begin{proof}
    To prove this, one firstly needs to show that each LR-D subspace transforms independently. {From} here, the proof of Result~\ref{result:bound_1} can then be applied to each individual space.  
    To see the independence of each LR-D subspace, recall that {an allowed unitary will apply independent unitaries within} each degenerate subspace of the operator defining our chosen basis. In the concentration scenario, {a global allowed unitarity} $V_{AB}$ therefore {acts as an} independent unitary within each degenerate space of {the operator $L_{AB} = \sum_c\lambda_c\Pi_c$ with spectrum $\{\lambda_c\}_c$. More precisely, for each $c$, there exist unitary operators $V_c$ such that} 
    {\begin{align}
        V_{AB} = \sum_{c=0}^{2d-2} V_{c} \Pi_c,
        \end{align}
        where the projector $\Pi_c$ is given by
        \begin{align}
        \Pi_c = {\sum_{n+m = k+j = c}} \ketbratwo{nm}{kl}.
    \end{align}} 
    A $j$th mode operator, as seen in the main text, is some operator supported in {the} $j$th mode subspace, 
\begin{equation}
    \sigma^{\modesuper{j}}_{AB} \in \mathcal{B}_{AB}^{\modesuper{j}} = {\rm span}\{ \ketbratwo{n+k, m+j-k}{n,m} : \forall~n,m,k \},
\end{equation}
    such that 
    \begin{equation}
        \begin{split}
            \sigma_{AB}^{\modesuper{j}} =  \sum_{k} \sum_{n,m} q_{n+k,m+j-k,n,m} \ketbratwo{n+k, m+j-k}{n,m} = {\sum_{n, \alpha, \beta} \Tilde{q}_{n+j, n, \alpha, \beta} \ketbratwo{n+j, \alpha}{n, \beta},}
        \end{split}
    \end{equation}
    where {we have performed} a relabelling to reflect the degeneracies in $L_{AB}$, using the dummy indices $\alpha, \beta$ to label the degeneracies. From here, it can be seen that 
    \begin{equation}
        \begin{split}
            V_{AB}{\sigma_{AB}^{\modesuper{j}}}V_{AB}^\dagger &= \bigg( \sum_{c=0}^{2d-2} V_{c}  \Pi_c \bigg) \sigma_{AB}^{\modesuper{j}} \bigg(\sum_{e=0}^{2d-2}  \Pi_e V_{e}^\dagger \bigg)= \sum_{c,e=0}^{2d-2} \sum_{n, \alpha, \beta} \Tilde{q}_{n+j, n, \alpha, \beta} ~ V_c \Pi_c  \ketbratwo{n+j, \alpha}{n, \beta} \Pi_e V_e^\dagger \\
            &= \sum_{n, \alpha, \beta}  \Tilde{q}_{n+j, n, \alpha, \beta} V_{n+j} \ketbratwo{n+j, \alpha}{n, \beta} V_{n}^\dagger = \sum_n V_{n+j} \Pi_{n+j}{\sigma^{\modesuper{j}}_{AB}}\Pi_{n} V_n.
        \end{split}
    \end{equation}
    where in {the first line} we have used the fact that \hbox{$\Pi_c \ket{n, \alpha} = \delta_{c,n} \ket{n, \alpha}~\forall~\alpha$}, and in {the second line} we have used
    \begin{equation}
        \begin{split}
            \Pi_{c+j} {\sigma^{\modesuper{j}}_{AB}}\Pi_{c} = \sum_{\alpha, \beta} \Tilde{q}_{c+j, c, \alpha, \beta} \ketbratwo{c+j, \alpha}{c, \beta}. 
        \end{split}
    \end{equation}
    As the unitary $V_c$ maps states within the $c$th degenerate subspace back to $c$th degenerate subspace, it can be concluded that each $\Pi_{c+j} {\sigma^{\modesuper{j}}_{AB}}\Pi_{c}$, {that is, each LR-D subspace,} transforms independently under {allowed unitaries} $V_{AB}$. We note that {the} $c$th degenerate subspace of $L_{AB}$ has dimension 
\begin{equation}
    {d(c,L_{AB}) \coloneqq} \begin{cases}
        c+1 ~ ~ &0\leq c < d-1, \\
        2d-1-c ~ ~ &d-1\leq c \leq 2d-2.
    \end{cases}
\end{equation} 
    {Hence, the $(c+j,c)$-LR-D operator has its input and output dimensions as $d(c+j,L_{AB})$ and $d(c,L_{AB})$, respectively.}
    {Also, there are $2d-2-j$ LR-D operators} within the $j$th global mode.

    Returning to the original labelling, one can define the $(c+j, c)$-LR-D subspace of the $j$th mode space as
    \begin{equation}
        \mathcal{B}_{AB}^{\modesuper{j,c}} {\coloneqq {\rm span}}\{ \ketbratwo{n+k, m+j-k}{n,m} : n+m=c \} \subset \mathcal{B}_{AB}^{\modesuper{j}}.
    \end{equation}
    As before, this can be split into {two subspaces,} one where the components make a contribution to the local mode and where they do not
    \begin{equation}
        \begin{split}
            \mathcal{V}^{\modesuper{j,c}}_{\mathrm{in}} &{\coloneqq {\rm span}}\{ \ketbratwo{n+j, m}{n,m} : n+m=c \} {\subseteq  \mathcal{B}_{AB}^{\modesuper{j,c}};}  \\
            \mathcal{V}^{\modesuper{j,c}}_{\mathrm{out}} &{\coloneqq {\rm span}}\{ \ketbratwo{n+k, m+j-k}{n,m} : n+m=c, k \neq j \} {\subseteq  \mathcal{B}_{AB}^{\modesuper{j,c}},}
        \end{split}
    \end{equation}
    where the aim of mode concentration is still {to maximise} the coefficients {with respect to the basis operators} in $\mathcal{V}^{\modesuper{j,c}}_{\mathrm{in}}$. Given {that each of the LR-D subspaces evolves} unitarily, the proof of Result~\ref{result:bound_1} can then be applied to each LD-R subspace individually. 
\end{proof}

\begin{figure}[h!]
    \centering
    \begin{minipage}{0.48\textwidth}
        \centering
        \includegraphics[width=\linewidth]{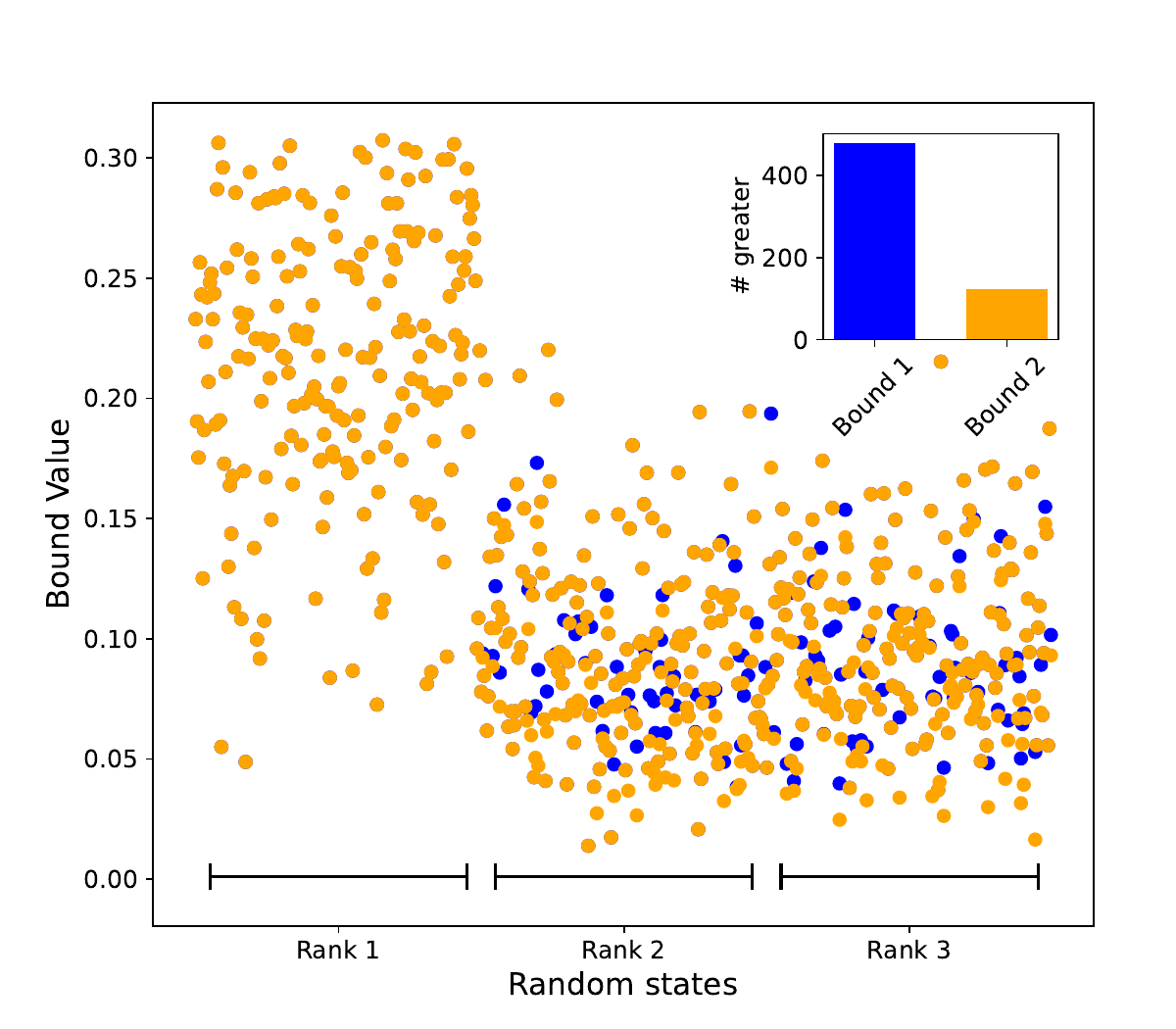}
        \caption{The value of Result~\ref{result:bound_1} (bound 1) and Result~\ref{result:bound_2} (bound 2) plotted for a sample of states in $d=3$ for all possible ranks. The insert shows the number of sampled states for which a given bound is greater than the other.}
        \label{fig: bound_compare_dim_3}
    \end{minipage}
    \hfill
    \begin{minipage}{0.48\textwidth}
        \centering
        \includegraphics[width=\linewidth]{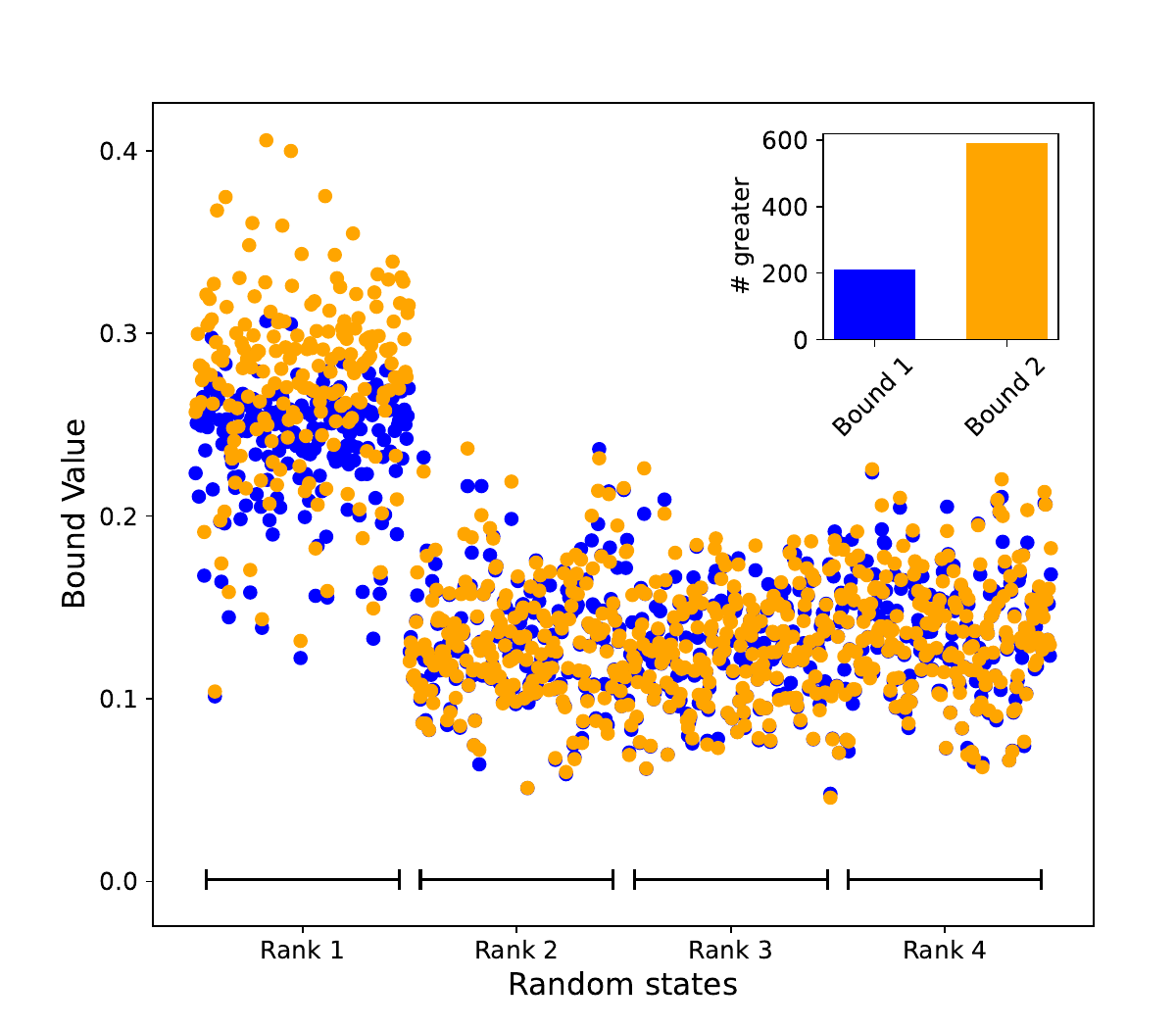}
        \caption{The value of Result~\ref{result:bound_1} (bound 1) and Result~\ref{result:bound_2} (bound 2) plotted for a sample of states in $d=4$ for all possible ranks. The insert shows the number of sampled states for which a given bound is greater than the other.}
        \label{fig: bound_compare_dim_4}
    \end{minipage}
\end{figure}
\section{Bound Comparison} \label{appendix: bound comparision}
Here, the {bounds on $\Delta M_{A}^{\scriptscriptstyle(j)}$ given in Result~\ref{result:bound_1} and Result~\ref{result:bound_2}} are compared for {$d =3$ and $d=4$} for a sample of random states {varying in rank (see Figs.~\ref{fig: bound_compare_dim_3} and~\ref{fig: bound_compare_dim_4}).} It is firstly noted that for {$d=3$, both} bounds are equal for all tested {pure (i.e., rank-one)} states. {For states with higher ranks in $d=3$, and states in $d=4$ (including rank-one),} we see that there is not a clear comparison of the two upper bounds. This can be confirmed in the insert, which shows the number of sampled states for which one bound is greater {than the other. In general, it can be seen that Result~\ref{result:bound_2} performs better for $d=3$, and  Result~\ref{result:bound_1} performs better for $d=4$.} We conjecture that these bounds will not coincide for all $d$, and that in general one should just take whichever of the two bounds is smallest to be the upper bound on the concentrateable coherence.

\section{A Sufficient Condition for The existence of Correlations} \label{appendix: A Sufficient Condition for The existence of Correlations}

Here, we show {that} if a bipartite state {in $AB$ (with equal local dimension $d$)} has support only on the $0$th mode and modes $k\geq d$, then {there must} exist correlation between {$A$ and $B$. Formally, we prove that:}
\begin{lemma}\label{lemma: A Sufficient Condition for The existence of Correlations}
{Let $\rho_{AB}$ be a bipartite state in $AB$ with equal local dimension $d<\infty$.
Then $\rho_{AB}$ is not a product state, and hence has correlation, if}
    \begin{equation}
        {\rm modes}(\rho_{AB}) \neq \{0\} ~ ~  but ~ ~ [1, d-1] \cap {\rm modes}(\rho_{AB}) = \emptyset.
    \end{equation}
    {Namely, $\rho_{AB}$ satisfies} 
    \begin{equation}
    \rho_{AB} = \rho_{AB}^{(0)} + \sum_{i \geq d}\rho_{AB}^{(i)}\quad {\&\quad\sum_{i \geq d}\rho_{AB}^{(i)}\neq0.} 
    \end{equation}
\end{lemma}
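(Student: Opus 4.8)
The plan is to prove the contrapositive: assuming $\rho_{AB}=\rho_A\otimes\rho_B$ is a product state while ${\rm modes}(\rho_{AB})\neq\{0\}$, I will show that $[1,d-1]\cap{\rm modes}(\rho_{AB})\neq\emptyset$, contradicting the hypothesis. The workhorse is the product-state mode formula recalled above, $(\rho_A\otimes\rho_B)^{(j)}=\sum_k\rho_A^{(k)}\otimes\rho_B^{(j-k)}$, combined with the elementary dimension bound: because $\rho_X$ is $d$-dimensional, its local modes are confined to $-d+1\le k\le d-1$, so $\rho_X^{(k)}=0$ whenever $|k|\ge d$, for $X=A,B$.

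First I would record a \emph{no-cancellation} fact. For fixed $j$, the summands $\rho_A^{(k)}\otimes\rho_B^{(j-k)}$ live in the mutually disjoint operator subspaces $\mathcal{B}_A^{(k)}\otimes\mathcal{B}_B^{(j-k)}\subset\mathcal{B}_{AB}^{(j)}$ indexed by $k$ (distinct $k$ give disjoint families of basis operators $\ketbratwo{n+k,m+j-k}{n,m}$), so the whole sum vanishes if and only if every summand vanishes; and since the tensor product of two nonzero operators is nonzero, $\rho_A^{(k)}\otimes\rho_B^{(j-k)}\neq 0$ exactly when $\rho_A^{(k)}\neq0$ and $\rho_B^{(j-k)}\neq0$. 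I would also note $\rho_X^{(0)}\neq0$ for every state, since its $0$th mode carries the diagonal, whose trace is $1$.

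Next, since ${\rm modes}(\rho_{AB})\neq\{0\}$ and ${\rm modes}(\rho_{AB})\cap[1,d-1]=\emptyset$ would hold, there is $k\ge d$ with $\rho_{AB}^{(k)}=(\rho_A\otimes\rho_B)^{(k)}\neq0$; equivalently $\sum_{i\ge d}\rho_{AB}^{(i)}\neq0$. By the no-cancellation fact there is an index $l$ with $\rho_A^{(l)}\neq0$ and $\rho_B^{(k-l)}\neq0$. The dimension bound forces $-d+1\le l\le d-1$ and $-d+1\le k-l\le d-1$, which together with $k\ge d$ yields $1\le k-(d-1)\le l\le d-1$, so $l\in[1,d-1]$. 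Finally, $\rho_A^{(l)}\otimes\rho_B^{(0)}$ is a nonzero, hence non-cancelling, summand of $(\rho_A\otimes\rho_B)^{(l)}$, so $l\in{\rm modes}(\rho_{AB})\cap[1,d-1]$ — a contradiction. Hence $\rho_{AB}$ is not a product state, and the explicit form $\rho_{AB}=\rho_{AB}^{(0)}+\sum_{i\ge d}\rho_{AB}^{(i)}$ with $\sum_{i\ge d}\rho_{AB}^{(i)}\neq0$ is merely a restatement of the two hypotheses (using $[\rho^{(j)}]^\dagger=\rho^{(-j)}$ and that ${\rm modes}$ is indexed over $\mathbb{N}$).

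The only non-bookkeeping point — the part I expect to be the main obstacle to state cleanly — is the no-cancellation claim: one must be sure the various $\rho_A^{(k)}\otimes\rho_B^{(j-k)}$ cannot interfere destructively. Once the disjoint-support structure of the bipartite mode subspaces is made explicit, this is immediate, and the remainder is index chasing.
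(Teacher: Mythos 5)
Your proof is correct, but it takes a genuinely different route from the paper's. The paper argues by contradiction through the marginals: assuming $\rho_{AB}=\rho_A\otimes\rho_B$, it invokes Lemma~\ref{global_Modes_to_local_modes} to conclude $\rho_A=\rho_A^{(0)}$ and $\rho_B=\rho_B^{(0)}$ (the modes in $[1,d-1]$ vanish by hypothesis and the modes $i\geq d$ trace to zero locally), and then reassembles $\rho_{AB}=\rho_A^{(0)}\otimes\rho_B^{(0)}=\rho_{AB}^{(0)}$ via the product-mode formula, contradicting $\sum_{i\geq d}\rho_{AB}^{(i)}\neq0$. You never touch the partial trace: you work entirely inside the global mode convolution $(\rho_A\otimes\rho_B)^{(k)}=\sum_l\rho_A^{(l)}\otimes\rho_B^{(k-l)}$, use the disjoint-support (no-cancellation) structure of the subspaces $\mathcal{B}_A^{(l)}\otimes\mathcal{B}_B^{(k-l)}$ to extract an $l$ with $\rho_A^{(l)}\neq0$ and $\rho_B^{(k-l)}\neq0$, and then use the dimension bounds $|l|\leq d-1$, $|k-l|\leq d-1$, $k\geq d$ to pin $l\in[1,d-1]$ and exhibit the nonzero summand $\rho_A^{(l)}\otimes\rho_B^{(0)}$ of $(\rho_A\otimes\rho_B)^{(l)}$, so mode $l$ is populated --- contradiction. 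The paper's version is shorter given that Lemma~\ref{global_Modes_to_local_modes} is already in hand; yours is more self-contained and yields slightly more information (it identifies exactly which intermediate mode a product state would be forced to populate, namely any $l$ with $k-d+1\leq l\leq d-1$). The no-cancellation step you flag as the delicate point is indeed the crux of your route, and your justification of it --- distinct $k$ index disjoint families of basis operators $\ketbratwo{n+k,m+j-k}{n,m}$, and a tensor product of nonzero operators is nonzero --- is sound.
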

\begin{proof}
    {Suppose the opposite, that $\rho_{AB}=\rho_A\otimes\rho_B$ and} 
    \begin{equation}
        \rho_A \otimes \rho_B = \rho_{AB}^{(0)} + \sum_{i \geq d}\rho_{AB}^{(i)}\quad{\&\quad\sum_{i \geq d}\rho_{AB}^{(i)}\neq0.} \label{eq: proof by contradiction}
    \end{equation}
    Due to Lemma~\ref{global_Modes_to_local_modes}, it can then be concluded that \hbox{$\rho_A = {\rm tr}_B{\big(\rho_{AB}^{\modesuper{0}}\big)} = \rho_A^{\modesuper{0}}$} and \hbox{$\rho_B = {\rm tr}_A{\big(\rho_{AB}^{\modesuper{0}}\big)} = \rho_B^{\modesuper{0}}$}, meaning that both $\rho_A$ and $\rho_B$ would need to be states supported only on the $0$th mode. However, using the fact that the bipartite modes of a tensor product of states {are given by~\cite{PhysRevA.90.062110}:}
    \begin{equation}
        (\rho_A \otimes \rho_B)^{\modesuper{j}} = \sum_{k} \rho_A^{\modesuper{k}} \otimes \rho_B^{\modesuper{j-k}}\quad{\forall\,j,}
    \end{equation}
    {it can be concluded that $\rho_{AB} = \rho_A^{\modesuper{0}} \otimes \rho_B^{\modesuper{0}} = \rho_{AB}^{\modesuper{0}}$,
    meaning that $\sum_{i \geq d}\rho_{AB}^{(i)}=0$, in contradiction with Eq.~\eqref{eq: proof by contradiction}.}
\end{proof}

\section{Qutrit Example}\label{Section: qutrit example sup mat}

Here we will run through an example of mode concentration using two qutrits. We will use graphical deceptions where appropriate to make our explanations as clear as possible.

Let {$\rho$ be a qutrit state,} and $L = \sum_{n=0}^{2} n \ketbra{n}$ be the Hermitian operator defining the eigenspaces between which we aim to increase coherence. The $j$th mode of $\rho$ is define as $\rho^{\modesuper{j}} = \sum_{n} p_{n+j,n} \ketbratwo{n+j}{n}$, with all the modes of $\rho$ be visualisable as
\NiceMatrixOptions{cell-space-limits = 1pt}
\begin{equation}
    \rho = 
\begin{bNiceArray}{>{\strut}ccc}[margin,extra-margin = 3pt, columns-width = 8pt]
p_{00} & p_{01} & p_{02}  \\
p_{10} & p_{11} & p_{12} \\
p_{20} & p_{21} & p_{22} \\
\CodeAfter
    \begin{tikzpicture}
    \tikzset{main diagonal highlight/.style={
        rounded corners=5pt, 
        thick,               
        fill=#1!100,          
        inner sep=-1pt,       
        rotate fit=-30,      
        fill opacity=0.4,    
    }}

    \node [main diagonal highlight=red, fit = (1-1) (3-3), rotate fit=-15, inner sep=-2.5pt] {};
    \node [main diagonal highlight=blue, fit = (1-2) (2-3), inner sep=-1.2pt] {};
    \node [main diagonal highlight=green, fit = (1-3), inner sep=0pt] {};
    \node [main diagonal highlight=yellow, fit = (2-1) (3-2), inner sep=-1.5pt, rotate fit=-15] {};
    \node [main diagonal highlight=orange, fit = (3-1) (3-1), inner sep=-1pt] {};

    \end{tikzpicture}
\end{bNiceArray} \qquad 
    \text{where } 
    \qquad
    \begin{array}{c|c}
        \text{Mode} & \text{Colour} \\ 
        \hline \\ [-1em]
        \rho^{\modesuper{0}} & \drawCustomCircle{0}{0}{0.1cm}{red}{red}{0.4}{0pt}{} \\
        \rho^{\modesuper{1}} & \drawCustomCircle{0}{0}{0.1cm}{yellow}{yellow}{0.4}{0pt}{} \\
        \rho^{\modesuper{2}} & \drawCustomCircle{0}{0}{0.1cm}{orange}{orange}{0.4}{0pt}{} \\
        \rho^{\modesuper{-1}} & \drawCustomCircle{0}{0}{0.1cm}{blue}{blue}{0.4}{0pt}{} \\
        \rho^{\modesuper{-2}} & \drawCustomCircle{0}{0}{0.1cm}{green}{green}{0.4}{0pt}{} \\
    \end{array}
\end{equation}
such that, {for example,} 
\begin{equation}
    \rho^{\modesuper{0}} = \begin{bNiceArray}{>{\strut}ccc}[margin,extra-margin = 3pt, columns-width = 8pt]
    p_{00} & 0 & 0  \\
    0 & p_{11} & 0  \\
    0 & 0 & p_{22}  \\
    \end{bNiceArray} 
    \qquad
    \textnormal{and} \qquad
    \rho^{\modesuper{2}} = \begin{bNiceArray}{>{\strut}ccc}[margin,extra-margin = 3pt, columns-width = 8pt]
    0 & 0 & 0  \\
    0 & 0 & 0  \\
    p_{20} & 0 & 0  \\
    \end{bNiceArray}  .
\end{equation}

We will here aim to concentrate coherence onto the $2$nd mode, detailing the framework presented in the main text.
{To do so, one needs to apply an allowed unitary} to the $2$nd global mode and then {perform} the partial trace. The $2$nd global mode is given by 
\begin{equation}
    (\rho_A \otimes \rho_B)^{\modesuper{2}} = \rho_A^{\modesuper{0}} \otimes \rho_B^{\modesuper{2}} + \rho_A^{\modesuper{1}} \otimes \rho_B^{\modesuper{1}} + \rho_A^{\modesuper{2}} \otimes \rho_B^{\modesuper{0}},
\end{equation}
which, with respect to the eigenbasis of $L_{AB}$, is given by 
\begin{equation}
    (\rho_A \otimes \rho_B)^{\modesuper{2}} = \begin{bNiceArray}{>{\strut}ccccccccc}[margin,extra-margin = 3pt, columns-width = 25pt]
    0 & 0 & 0 & 0 & 0 & 0 & 0 & 0 & 0 \\
    0 & 0 & 0 & 0 & 0 & 0 & 0 & 0 & 0 \\
    \drawCustomCircletwo{0}{0}{0.25cm}{orange}{orange}{0.4}{0pt}{$p_{00}p_{20}$} & 0 & 0 & 0 & 0 & 0 & 0 & 0 & 0 \\
    0 & 0 & 0 & 0 & 0 & 0 & 0 & 0 & 0 \\
    \drawCustomCircletwo{0}{0}{0.25cm}{orange}{orange}{0.4}{0pt}{$p_{10}p_{10}$} & 0 & 0 & 0 & 0 & 0 & 0 & 0 & 0 \\
    0 &  \drawCustomCircletwo{0}{0}{0.25cm}{orange}{orange}{0.4}{0pt}{$p_{10}p_{21}$} & 0 &  \drawCustomCircletwo{0}{0}{0.25cm}{orange}{orange}{0.4}{0pt}{$p_{11}p_{20}$} & 0 & 0 & 0 & 0 & 0 \\
     \drawCustomCircletwo{0}{0}{0.25cm}{pink}{pink}{0.4}{0pt}{$p_{20}p_{00}$} & 0 & 0 & 0 & 0 & 0 & 0 & 0 & 0 \\
    0 &  \drawCustomCircletwo{0}{0}{0.25cm}{pink}{pink}{0.4}{0pt}{$p_{20}p_{11}$} & 0 &  \drawCustomCircletwo{0}{0}{0.25cm}{orange}{orange}{0.4}{0pt}{$p_{21}p_{10}$} & 0 & 0 & 0 & 0 & 0 \\
    0 & 0 &  \drawCustomCircletwo{0}{0}{0.25cm}{pink}{pink}{0.4}{0pt}{$p_{20}p_{22}$} & 0 &  \drawCustomCircletwo{0}{0}{0.25cm}{orange}{orange}{0.4}{0pt}{$p_{21}p_{21}$} & 0 &  \drawCustomCircletwo{0}{0}{0.25cm}{orange}{orange}{0.4}{0pt}{$p_{22}p_{20}$} & 0 & 0 \\
    \end{bNiceArray} 
    \qquad
        \text{} 
    \qquad
    \begin{array}{c|c}
        \text{Space} & \text{Colour} \\ 
        \hline \\ [-1em]
        {\mathcal{V}_{\rm in}^{(2)}} & \drawCustomCircle{0}{0}{0.1cm}{pink}{pink}{0.4}{0pt}{} \\
        {\mathcal{V}_{\rm out}^{(2)}} & \drawCustomCircle{0}{0}{0.1cm}{orange}{orange}{0.4}{0pt}{} \\
    \end{array}
\end{equation}
where we have highlighted the coefficients associated {with basis operators in {$\mathcal{V}_{\rm in}^{(2)}$,} i.e., the basis operators whose} coefficients contribute to the local mode. In this case, the $2$nd local mode has a single component, given by the sum of the coefficients associated {with the basis operators in {$\mathcal{V}_{\rm in}^{(2)}$}} (the pink dots). Hence, {$M^{\modesuper{2}}(\sigma_A)$} is the absolute value of the sum of these components. Optimally concentrating coherence on the $2$nd mode therefore means maximising the absolute value of the sum of the coefficients {associated with the basis operators in {$\mathcal{V}_{\rm in}^{(2)}$}.}  An upper-bound on {$M^{\modesuper{2}}(\sigma_A)$} is then given by the sum of the maximised absolute values of each coefficient, due to the triangle inequality.

    {
    \subsection{Qutrit Example for Result~\ref{result:bound_1}}
    }
{We now} detail the logical steps that led to Result~\ref{result:bound_1} in the main text. Firstly, we relax the problem by assuming that any {pre- and post-processing} unitary can be applied to $(\rho_A \otimes \rho_B)^{\modesuper{2}}$, not just a conjugation by {an allowed unitary.} From here, Lemma~\ref{lemma_sum_coefficents} can be employed to show that the sum over the absolute value of the matrix coefficients that give the $2$nd local mode is upper-bounded by the $3$rd Ky-Fan norm of this operator. This is because we are summing over a diagonal of length ${\rm dim}({\mathcal{V}_{\rm in}^{(2)}})=3$. {As} {$M^{\modesuper{2}}(\sigma_A)$} is upper-bounded by the sum of the absolute values of these matrix coefficients, 
{which is further} upper-bounded by the $3$rd Ky-Fan norm of this operator, {one can therefore conclude that \hbox{$M^{\modesuper{2}}(\sigma_A) \leq \| (\rho_A \otimes \rho_B)^{\modesuper{2}} \|_{\text{$3$-{\rm KF}}}$}. }

    {
    \subsection{Qutrit Example for Result~\ref{result:bound_2}}
    }

We now consider in detail splitting $(\rho_A \otimes \rho_B)^{\modesuper{2}}$ up into the individual LR-D subspaces that evolve independently under {allowed unitaries,} and hence go {through} the logical steps that led to Result~\ref{result:bound_2} in the main text. 
Firstly, as stated in {Section~\ref{appendix:result_2_proof},} any {allowed} unitary can be decomposed as {\hbox{$V_{AB} = \sum_{c} V_c \Pi_c$}, where} each $V_k$ is {an independent unitary, if considering coherence between the eigenspaces of the operator $L_{AB}=\sum_c \lambda_c \Pi_c$ with spectrum $\{\lambda_c\}_c$.} Here, we consider the Hermitian operator $L_{AB} = L_A \otimes \mathbb{I}_B + \mathbb{I}_A \otimes L_B$, such that with respect to the eigenbasis of $L_{AB}$ any {allowed} unitary can be decomposed as   

\begin{equation}
    V_{AB} = \begin{bNiceArray}{>{\strut}ccccccccc}[margin,extra-margin = 3pt, columns-width = 7pt]
    \drawCustomCircletwo{0}{0}{0.25cm}{yellow}{yellow}{0.8}{0pt}{} & 0 & 0 & 0 & 0 & 0 & 0 & 0 & 0 \\
    0 &\drawCustomCircletwo{0}{0}{0.25cm}{magenta}{magenta}{0.8}{0pt}{} & 0 & \drawCustomCircletwo{0}{0}{0.25cm}{magenta}{magenta}{0.8}{0pt}{} & 0 & 0 & 0 & 0 & 0 \\
    0 & 0 & \drawCustomCircletwo{0}{0}{0.25cm}{RedOrange}{RedOrange}{0.8}{0pt}{} & 0 & \drawCustomCircletwo{0}{0}{0.25cm}{RedOrange}{RedOrange}{0.8}{0pt}{} & 0 & \drawCustomCircletwo{0}{0}{0.25cm}{RedOrange}{RedOrange}{0.8}{0pt}{} & 0 & 0 \\
    0 & \drawCustomCircletwo{0}{0}{0.25cm}{magenta}{magenta}{0.8}{0pt}{} & 0 & \drawCustomCircletwo{0}{0}{0.25cm}{magenta}{magenta}{0.8}{0pt}{} & 0 & 0 & 0 & 0 & 0 \\
    0 & 0 & \drawCustomCircletwo{0}{0}{0.25cm}{RedOrange}{RedOrange}{0.8}{0pt}{} & 0 &\drawCustomCircletwo{0}{0}{0.25cm}{RedOrange}{RedOrange}{0.8}{0pt}{} & 0 & \drawCustomCircletwo{0}{0}{0.25cm}{RedOrange}{RedOrange}{0.8}{0pt}{} & 0 & 0 \\
    0 &  0 & 0 &  0 & 0 & \drawCustomCircletwo{0}{0}{0.25cm}{cyan}{cyan}{1}{0pt}{} & 0 & \drawCustomCircletwo{0}{0}{0.25cm}{cyan}{cyan}{0.8}{0pt}{} & 0 \\
    0 & 0 & \drawCustomCircletwo{0}{0}{0.25cm}{RedOrange}{RedOrange}{0.8}{0pt}{} & 0 & \drawCustomCircletwo{0}{0}{0.25cm}{RedOrange}{RedOrange}{0.8}{0pt}{} & 0 & \drawCustomCircletwo{0}{0}{0.25cm}{RedOrange}{RedOrange}{0.8}{0pt}{} & 0 & 0 \\
    0 & 0 & 0 &  0 & 0 & \drawCustomCircletwo{0}{0}{0.25cm}{cyan}{cyan}{0.8}{0pt}{} & 0 & \drawCustomCircletwo{0}{0}{0.25cm}{cyan}{cyan}{0.8}{0pt}{} & 0 \\
    0 & 0 & 0 & 0 &  0 & 0 &  0 & 0 & \drawCustomCircletwo{0}{0}{0.25cm}{violet}{violet}{0.8}{0pt}{} \\
    \end{bNiceArray} 
    \qquad
        \text{} 
    \qquad
    \begin{array}{c|c}
        \text{Unitary} & \text{Colour} \\ 
        \hline \\ [-1em]
        V_0 & \drawCustomCircle{0}{0}{0.1cm}{yellow}{yellow}{0.8}{0pt}{} \\
        V_1 & \drawCustomCircle{0}{0}{0.1cm}{magenta}{magenta}{0.8}{0pt}{} \\
        V_2 & \drawCustomCircle{0}{0}{0.1cm}{RedOrange}{RedOrange}{0.8}{0pt}{} \\
        V_3 & \drawCustomCircle{0}{0}{0.1cm}{cyan}{cyan}{0.8}{0pt}{} \\
        V_4 & \drawCustomCircle{0}{0}{0.1cm}{violet}{violet}{0.8}{0pt}{} \\

    \end{array}
\end{equation}
where each matrix {element associated with a given} independent unitary is represented by a colour. This unitary is then applied to \hbox{$(\rho_A \otimes \rho_B)^{\modesuper{2}}$}. 
We now represent \hbox{$(\rho_A \otimes \rho_B)^{\modesuper{2}}$} while highlighting all the different LR-D subspaces with the mode,
\begin{equation}
    (\rho_A \otimes \rho_B)^{\modesuper{2}} = \begin{bNiceArray}{>{\strut}ccccccccc}[margin,extra-margin = 3pt, columns-width = 25pt]
    0 & 0 & 0 & 0 & 0 & 0 & 0 & 0 & 0 \\
    0 & 0 & 0 & 0 & 0 & 0 & 0 & 0 & 0 \\
    \drawCustomCircletwo{0}{0}{0.25cm}{Plum}{Plum}{0.4}{0pt}{$p_{00}p_{20}$} & 0 & 0 & 0 & 0 & 0 & 0 & 0 & 0 \\
    0 & 0 & 0 & 0 & 0 & 0 & 0 & 0 & 0 \\
    \drawCustomCircletwo{0}{0}{0.25cm}{Plum}{Plum}{0.4}{0pt}{$p_{10}p_{10}$} & 0 & 0 & 0 & 0 & 0 & 0 & 0 & 0 \\
    0 &  \drawCustomCircletwo{0}{0}{0.25cm}{teal}{teal}{0.4}{0pt}{$p_{10}p_{21}$} & 0 &  \drawCustomCircletwo{0}{0}{0.25cm}{teal}{teal}{0.4}{0pt}{$p_{11}p_{20}$} & 0 & 0 & 0 & 0 & 0 \\
     \drawCustomCircletwo{0}{0}{0.25cm}{Plum}{Plum}{0.4}{0pt}{$p_{20}p_{00}$} & 0 & 0 & 0 & 0 & 0 & 0 & 0 & 0 \\
    0 &  \drawCustomCircletwo{0}{0}{0.25cm}{teal}{teal}{0.4}{0pt}{$p_{20}p_{11}$} & 0 &  \drawCustomCircletwo{0}{0}{0.25cm}{teal}{teal}{0.4}{0pt}{$p_{21}p_{10}$} & 0 & 0 & 0 & 0 & 0 \\
    0 & 0 &  \drawCustomCircletwo{0}{0}{0.25cm}{Bittersweet}{Bittersweet}{0.4}{0pt}{$p_{20}p_{22}$} & 0 &  \drawCustomCircletwo{0}{0}{0.25cm}{Bittersweet}{Bittersweet}{0.4}{0pt}{$p_{21}p_{21}$} & 0 &  \drawCustomCircletwo{0}{0}{0.25cm}{Bittersweet}{Bittersweet}{0.4}{0pt}{$p_{22}p_{20}$} & 0 & 0 \\
    \end{bNiceArray} 
    \qquad
        \text{} 
    \qquad
    \begin{array}{c|c}
        \text{LR-D Space} & \text{Colour} \\ 
        \hline \\ [-1em]
        (2,0) & \drawCustomCircle{0}{0}{0.1cm}{Plum}{Plum}{0.4}{0pt}{} \\
        (3,1) & \drawCustomCircle{0}{0}{0.1cm}{teal}{teal}{0.4}{0pt}{} \\
        (4,2) & \drawCustomCircle{0}{0}{0.1cm}{Bittersweet}{Bittersweet}{0.4}{0pt}{} \\
    \end{array}
\end{equation}
where the $(c+j,c)$ LR-D subspace of the $j$th mode is defined as $\Pi_{c+j}(\rho_A \otimes \rho_B)^{\modesuper{j}}\Pi_{c}$. 
{As stated in the main text,} the benefit of this formulation is that each LR-D subspace evolves independently and unitarily. More specifically, each of the LR-D subspaces evolve as 
\begin{align}
    V_{2} {\Pi_{2}(\rho_A \otimes \rho_B)^{\modesuper{2}}\Pi_{0}} V_{0} 
    &= 
    \begin{bNiceArray}{>{\strut}ccc}[margin,extra-margin = 3pt, columns-width = 8pt]
        \drawCustomCircletwo{0}{0}{0.25cm}{RedOrange}{RedOrange}{0.8}{0pt}{} &  
        \drawCustomCircletwo{0}{0}{0.25cm}{RedOrange}{RedOrange}{0.8}{0pt}{} &  
        \drawCustomCircletwo{0}{0}{0.25cm}{RedOrange}{RedOrange}{0.8}{0pt}{}  \\
        \drawCustomCircletwo{0}{0}{0.25cm}{RedOrange}{RedOrange}{0.8}{0pt}{} &  
        \drawCustomCircletwo{0}{0}{0.25cm}{RedOrange}{RedOrange}{0.8}{0pt}{} &  
        \drawCustomCircletwo{0}{0}{0.25cm}{RedOrange}{RedOrange}{0.8}{0pt}{}  \\
        \drawCustomCircletwo{0}{0}{0.25cm}{RedOrange}{RedOrange}{0.8}{0pt}{} &  
        \drawCustomCircletwo{0}{0}{0.25cm}{RedOrange}{RedOrange}{0.8}{0pt}{} &  
        \drawCustomCircletwo{0}{0}{0.25cm}{RedOrange}{RedOrange}{0.8}{0pt}{}  \\
    \end{bNiceArray}
    \begin{bNiceArray}{>{\strut}ccc}[margin,extra-margin = 3pt, columns-width = 0pt]
        p_{00}p_{20} \\
        p_{10}p_{10} \\
        \drawCustomCircletwo{0}{0}{0.25cm}{pink}{pink}{0.8}{0pt}{$p_{20}p_{00}$} \\
    \end{bNiceArray}
    \begin{bNiceArray}{>{\strut}c}[margin,extra-margin = 3pt, columns-width = 0pt]
        \drawCustomCircletwo{0}{0}{0.25cm}{yellow}{yellow}{0.8}{0pt}{}
    \end{bNiceArray}
    \\
    V_{3} {\Pi_{2}(\rho_A \otimes \rho_B)^{\modesuper{2}}\Pi_{0}} V_{1} 
    &= 
    \begin{bNiceArray}{>{\strut}cc}[margin,extra-margin = 0pt, columns-width = -1pt]
    \drawCustomCircletwo{0}{0}{0.25cm}{cyan}{cyan}{0.8}{0pt}{} &  
    \drawCustomCircletwo{0}{0}{0.25cm}{cyan}{cyan}{0.8}{0pt}{} \\
    \drawCustomCircletwo{0}{0}{0.25cm}{cyan}{cyan}{0.8}{0pt}{} &  
    \drawCustomCircletwo{0}{0}{0.25cm}{cyan}{cyan}{0.8}{0pt}{} \\
    \end{bNiceArray}
    \begin{bNiceArray}{>{\strut}ccc}[margin,extra-margin = 0pt, columns-width = 0pt]
        p_{10}p_{21} & p_{11}p_{20} \\ 
        \drawCustomCircletwo{0}{0}{0.25cm}{pink}{pink}{0.8}{0pt}{$p_{20}p_{11}$} & p_{21}p_{10} \\
    \end{bNiceArray}
    \begin{bNiceArray}{>{\strut}cc}[margin,extra-margin = 0pt, columns-width = -1pt]
    \drawCustomCircletwo{0}{0}{0.25cm}{magenta}{magenta}{0.8}{0pt}{} &  
    \drawCustomCircletwo{0}{0}{0.25cm}{magenta}{magenta}{0.8}{0pt}{} \\
    \drawCustomCircletwo{0}{0}{0.25cm}{magenta}{magenta}{0.8}{0pt}{} &  
    \drawCustomCircletwo{0}{0}{0.25cm}{magenta}{magenta}{0.8}{0pt}{} \\
    \end{bNiceArray}
    \\
    V_{4} { \Pi_{2}(\rho_A \otimes \rho_B)^{\modesuper{2}}\Pi_{0}} V_{2} 
    &= 
    \begin{bNiceArray}{>{\strut}c}[margin,extra-margin = 3pt, columns-width = 0pt]
        \drawCustomCircletwo{0}{0}{0.25cm}{violet}{violet}{0.8}{0pt}{}
    \end{bNiceArray} 
    \begin{bNiceArray}{>{\strut}ccc}[margin,extra-margin = 3pt, columns-width = 0pt]
        \drawCustomCircletwo{0}{0}{0.25cm}{pink}{pink}{0.8}{0pt}{$p_{20}p_{22}$}, & p_{21}p_{21}, & p_{22}p_{20} 
    \end{bNiceArray} 
    \begin{bNiceArray}{>{\strut}ccc}[margin,extra-margin = 3pt, columns-width = 8pt]
        \drawCustomCircletwo{0}{0}{0.25cm}{RedOrange}{RedOrange}{0.8}{0pt}{} &  
        \drawCustomCircletwo{0}{0}{0.25cm}{RedOrange}{RedOrange}{0.8}{0pt}{} &  
        \drawCustomCircletwo{0}{0}{0.25cm}{RedOrange}{RedOrange}{0.8}{0pt}{}  \\
        \drawCustomCircletwo{0}{0}{0.25cm}{RedOrange}{RedOrange}{0.8}{0pt}{} &  
        \drawCustomCircletwo{0}{0}{0.25cm}{RedOrange}{RedOrange}{0.8}{0pt}{} &  
        \drawCustomCircletwo{0}{0}{0.25cm}{RedOrange}{RedOrange}{0.8}{0pt}{}  \\
        \drawCustomCircletwo{0}{0}{0.25cm}{RedOrange}{RedOrange}{0.8}{0pt}{} &  
        \drawCustomCircletwo{0}{0}{0.25cm}{RedOrange}{RedOrange}{0.8}{0pt}{} &  
        \drawCustomCircletwo{0}{0}{0.25cm}{RedOrange}{RedOrange}{0.8}{0pt}{}  \\
    \end{bNiceArray}
\end{align}
In the above, we have highlighted what coefficients within each LR-D subspace belong to basis states in {$\mathcal{V}_{\rm in}^{(2)}$} using a pink dot. The aim when performing mode concentration is still to maximise the absolute value of the sum of the coefficients {associated with these basis operators.} What this framework provides is a finer level of detail over how these coefficients can evolve. 

To find the bound given in Result \ref{result:bound_2}, one applies the same methodology as for Result~\ref{result:bound_1}, but this time to each LR-D subspace individually. It can be seen, as above, that an {upper bound} on {$M^{\modesuper{2}}(\sigma_A)$} is given by the sum of the absolute values of the matrix coefficients associated {with basis operators} in {$\mathcal{V}_{\rm in}^{(2)}$.} Using Lemma~\ref{lemma_sum_coefficents}, this can be upper-bounded by sum over the $k$th Ky-Fan norm of all the $(c+j,c)$-LR-D operators, where $k$ is the number of basis states in {$\mathcal{V}_{\rm in}^{(2)}$} upon which a given $(c+j,c)$-LR-D operator has support.    

\end{document}